\long\def\ca#1\cb{} 
\newcommand{\qmu}{q_{\textup{MU}}}
\newcommand{\rhall}{r_{\textup{H}}}
\newcommand{\rgal}{r_{\textup{G}}}
\newcommand{\cmax}{c_{\textup{max}}}
\newcommand{\rr}{r}
\newcommand{\qq}{q}
\newcommand{\vertiii}[1]{{\left\vert\kern-0.25ex\left\vert\kern-0.25ex\left\vert #1 
    \right\vert\kern-0.25ex\right\vert\kern-0.25ex\right\vert}}
\newcommand{\ket}[1]{|#1\rangle}               
\newcommand{\colo}{\,\hbox{:}\,}              
\newcommand{\bra}[1]{\langle #1|}              
\newcommand{\dya}[1]{\ket{#1}\!\bra{#1}}
\newcommand{\dyad}[2]{\ket{#1}\!\bra{#2}}        
\newcommand{\ip}[2]{\langle #1|#2\rangle}      
\newcommand{\EC}{\mathcal{E}}
\newcommand{\HC}{\mathcal{H}}
\newcommand{\IC}{\mathcal{I}}
\newcommand{\XC}{\mathcal{X}}
\newcommand{\Tr}{{\rm Tr}}
\renewcommand{\geq}{\geqslant}
\renewcommand{\leq}{\leqslant}
\newcommand{\mte}[2]{\langle#1|#2|#1\rangle }
\newcommand{\mted}[3]{\langle#1|#2|#3\rangle }
\newcommand{\ot}{\otimes}
\newcommand{\ad}{^\dagger}
\newcommand*{\id}{\openone}
\newcommand{\rhoh}{\widehat{\rho}}
\newcommand{\rhob}{\overline{\rho}}
\newcommand{\dl}{\delta }
\newcommand{\Dl}{\Delta}
\renewcommand{\th}{\theta } 
\newcommand{\lm}{\lambda }
\newcommand{\sg}{\sigma }
\newcommand{\ddd}{d }
\newtheoremstyle{example}{\topsep}{\topsep}%
{}
{}
{\bfseries}
{.}
{   }
{\thmname{#1}\thmnumber{ #2}}
\theoremstyle{example}
\newtheorem{example}{Example}
\theoremstyle{plain}
\newtheorem{theorem}{Theorem}
\newtheorem{lemma}[theorem]{Lemma}
\newtheorem{corollary}[theorem]{Corollary}
\theoremstyle{definition}
\begin{document}

\title{Improved entropic uncertainty relations and information exclusion relations}

\author{Patrick J. Coles}
\affiliation{Centre for Quantum Technologies, National University of Singapore, 2 Science Drive 3, 117543 Singapore}

\author{Marco Piani}
\affiliation{Institute for Quantum Computing and Department of Physics and Astronomy, University of Waterloo, N2L3G1 Waterloo, Ontario, Canada}

\begin{abstract}
The uncertainty principle can be expressed in entropic terms, also taking into account the role of entanglement in reducing uncertainty. The information exclusion principle bounds instead the correlations that can exist between the outcomes of incompatible measurements on one physical system, and a second reference system. We provide a more stringent formulation of both the uncertainty principle and the information exclusion principle, with direct applications for, e.g., the security analysis of quantum key distribution, entanglement estimation, and quantum communication. We also highlight a fundamental distinction between the complementarity of observables in terms of uncertainty and in terms of information.
\end{abstract}

\pacs{03.67.-a, 03.67.Hk}

\maketitle

\section{Introduction}

A fundamental trait of quantum mechanics is the unavoidable uncertainty associated with measuring incompatible observables, i.e., the so-called uncertainty principle, which dates back to Heisenberg~\cite{Heisenberg}. Kennard~\cite{kennard1927quantum} formalised Heisenberg's original ideas in an uncertainty relation involving the products of standard deviations of the position $y$ and momentum $p_y$ observables, with the well-known inequality $\Dl y \Dl p_y \geq \hbar / 2$. Robertson \cite{Robertson} generalised this to arbitrary Hermitian observables $X$ and $Z$ and found the uncertainty relation $\Dl X \Dl Z \geq \frac{1}{2} | \bra{\psi} [ X, Z ] \ket{\psi} |$. From a conceptual point of view, though, standard deviation is an inadequate measure of uncertainty, when the latter is understood in terms of (lack of) knowledge of ``which outcome'' of a measurement, rather than in terms of the value of the outcome. Also, the right-hand side (r.h.s.) of Robertson's relation gives a trivial bound for states $\ket{\psi}$ that have zero expectation of the commutator, even if $\ket{\psi}$ is not a common eigenstate of $X$ and $Z$. It has thus been proposed to use the \textit{entropy} of the probability distribution of the outcomes as the measure of  uncertainty~\footnote{See \cite{EURreview1} for a historical review, and \cite{deutsch,EURreview2} for reasons why standard deviation is an inadequate uncertainty measure.}.

The best known \textit{entropic uncertainty relation} is probably the one by Maassen and Uffink \cite{MaassenUffink}. They proved that, for any state $\rho_A$ of a quantum system $A$ with a finite dimension $d = \dim (\HC_A)$, it holds
\begin{equation}
\label{eqn23777a}
H(X)+H(Z) \geq q_{\textup{MU}},
\end{equation}
where $X = \{ \ket{x_j}\} $ and $Z = \{\ket{z_k} \}$ indicate here orthonormal bases on $\HC_A$, and $H(X) = - \sum_j p^x_j \log_2 p^x_j $ is the Shannon entropy of the probability distribution $\{p^x_j = \mte{x_j}{\rho_A}\}$ (similarly for $H(Z)$). The r.h.s.\ of \eqref{eqn23777a} measures the strength of the knowledge tradeoff:  the sum of the ``ignorance'' (as measured by entropy) about $X$ and $Z$ cannot be smaller than
\begin{equation}
\label{eqn23777b}
\qmu= \log_2(1/\cmax),\quad \cmax = \max_{j,k} c_{jk},\quad c_{jk}=|\ip{x_j}{z_k}|^2.
\end{equation}
One has $\qmu=0$ (i.e., $\cmax=1$) if and only if (iff) $X$ and $Z$ share a basis element, while $\qmu$ is maximal, $\qmu=\log_2 d$ (i.e., $\cmax= 1/d$), iff $X$ and $Z$ are fully complementary, with $c_{jk}=1/d$ for all $j,k$.

The uncertainty principle inspired the original proposal for quantum cryptography \cite{Wiesner}. However, the uncertainty relations known at the time did not take into account the possibility for an eavesdropper to have quantum correlations, i.e., entanglement~\cite{HHHH09}, with the system being measured. Hence, those relations could not be directly used to prove cryptographic security.  Berta et al.\ \cite{BertaEtAl} filled such a gap, generalizing the uncertainty relation \eqref{eqn23777a} to take into account the possible use of a quantum memory. The latter would allow Bob, who is supposed to have access to a quantum system $B$ that may be entangled to Alice's system $A$, to violate \eqref{eqn23777a}~\cite{BertaEtAl}. Berta et al. showed that nonetheless, for any bipartite state $\rho_{AB}$, Bob's uncertainty about the result of measurements in the $X$ and $Z$ bases on Alice's system is bounded by
\begin{equation}
\label{eqn23778}
H(X|B)+H(Z|B) \geq \qmu +H(A|B),
\end{equation}
where $H(A|B)=H(\rho_{AB}) - H(\rho_B)$ is the \emph{conditional von Neumann entropy}, with $H(\sg) = -\Tr(\sg\log_2 \sg)$ the von Neumann entropy, and $\rho_B$ the reduced state of $\rho_{AB}$ on $B$.  $H(X|B)$ can be interpreted as Bob's ignorance about the result of Alice's measurement of $X$ on $A$, given that Bob has access to the system $B$ (similarly for $H(Z|B)$)~\footnote{See Cor.~\ref{thmMain230a} for a precise definition of $H(X|B)$.}. The two terms $H(X|B)$ and $H(Z|B)$ are non-negative since they represent classical uncertainties, but $H(A|B)$ can be negative if $\rho_{AB}$ is entangled~\cite{HHHH09}, so that the effect of entanglement is to weaken the knowledge tradeoff. While equation~\eqref{eqn23778} reduces to \eqref{eqn23777a} when $B$ is a trivial system, if $AB$ are maximally entangled, $\rho_{AB}=\ket{\phi}\bra{\phi}$, $\ket{\phi}=(1/\sqrt{d})\sum_i\ket{i}\ket{i}$, we have $H(A|B)= -\log_2 d \leq - \qmu$ independently of $X$ and $Z$, and the r.h.s.\ of \eqref{eqn23778} gives a \textit{trivial} bound on Bob's uncertainty. The generality of \eqref{eqn23778} opens up a range of applications, e.g., in entanglement witnessing \cite{BertaEtAl,PHCFR, LXXLG} and in the security analysis of quantum key distribution \cite{TomRen2010, TLGR}.

\subsection{Summary of results}

One main result of this article is to improve the bound in \eqref{eqn23778} by replacing $\qmu$ with a larger parameter almost always strictly greater than $\qmu$. Another result is the improvement of  Hall's ``information exclusion principle" \cite{Hall1}, which regards the mutual information between the outcomes of measurements on one physical system, and a second system correlated with the first system. Mutual information is a measure of correlations, and is the central quantity in, e.g., communication theory~\cite{CvTh06}. It quantifies the number of bits of information gained---equivalently, the reduction of ignorance---about $X$ when given access to $Y$, and can indeed be defined as $I(X\colo Y) = H(X)- H(X|Y)$. Hall's idea was essentially to reformulate the uncertainty principle in terms of mutual information, as follows. Let $X$ and $Z$ be two orthonormal bases on system $A$, and let $Y$ be a classical register that may be correlated to $A$. Then
\begin{gather}
\label{eqn23782}
I(X\colo Y)+I(Z\colo Y) \leq \rhall,\quad\rhall = \log_2 (d^2 \cdot \cmax ).
\end{gather}
Hall's bound says that one cannot probe the register $Y$ in order to obtain complete information about both the $X$ and $Z$ observables, if these two observables have a small value of $\cmax$ (defined in \eqref{eqn23777b}).
Bounds on the sum of complementary information terms have been called \textit{information exclusion relations} \cite{Hall1, HallPRA1997, ColesEtAl}. They have not been studied as much as uncertainty relations~\footnote{Except in applications involving transmission over quantum channels \cite{ChristWinterIEEE2005, ColesEtAl}, where information is a more a natural quantity than uncertainty.}, and the best known information exclusion relation, Eq. \eqref{eqn23782}, is actually not a very strong bound, as pointed out by Grudka et al.\ \cite{GrudkaEtAl2012}. Grudka et al. have attempted to remedy this by conjecturing a stronger information exclusion relation. They found numerical evidence, and proved analytically in some special cases, that
\begin{equation}
\label{eqn23782bb}
I(X\colo Y)+I(Z\colo Y) \leq \rgal,\quad \rgal = \log_2 \left(\ddd \cdot \sum_{\text{d largest}} c_{jk}\right),
\end{equation}
with the sum over the largest $\ddd$ terms of the matrix $[c_{jk}]$ (again, see \eqref{eqn23777b}). Since $\sum_{\text{d largest}} c_{jk} \leq d\cdot \cmax$ , we have $\rgal \leq \rhall$ (potentially with strict inequality) and, if true, \eqref{eqn23782bb} would be an improvement over Hall's bound. In what follows we shall actually prove a \emph{stronger} version of Grudka et al.'s conjecture. Furthermore we will extend it to the much more general case of quantum memory, where $Y$ is replaced by a general quantum system.

Besides improving both the uncertainty relation~\eqref{eqn23778} and the information exclusion relation~\eqref{eqn23782}, this article provides the insight that the complementarity of uncertainty (i.e., a limit on the knowledge about the outcomes of complementary observables) and the complementarity of information (i.e., a limit on the correlations between the outcomes of complementary observables and some external system) differ both conceptually and practically. Hence, from the quantitative point of view, we should not expect to have the same complementarity factor appearing in uncertainty relations and information exclusion relations. What makes Hall's bound weak is the use in \eqref{eqn23782} of the same parameter $\cmax$ as in \eqref{eqn23777a}.

In what follows, we first give a simplified presentation of our results in Secs.~\ref{sct2} and \ref{sct3} and then discuss their implications in Secs.~\ref{sct4} and \ref{sct5}. We then give a more detailed presentation, discussing the generalisation of our results for arbitrary positive operator valued measures (POVMs) in Sec.~\ref{apsct1}, and giving more details on our state-independent approach in Sec.~\ref{apsct2}. The main technical proofs are given in the Appendix.

\section{Improved uncertainty relation}\label{sct2}

Our main technical result is an entropic uncertainty relation that, much like \eqref{eqn23778}, accounts for the possible reduction of Bob's uncertainty about Alice's system thanks to the entanglement between systems $A$ and $B$. Before presenting our strongest result, we focus on a simple corollary that gives intuition about the nature of our improvement (see Appendix~\ref{app:proofcor} for the proof).

\begin{corollary}
\label{thmMain230a}
For any bipartite state $\rho_{AB}$, and any orthonormal bases $X = \{ \ket{x_j}\} $ and $Z = \{\ket{z_k} \}$ on $\HC_A$,
\begin{equation}
\label{eqn23779aaaaa}
H(X|B)+H(Z|B) \geq q' + H(A|B),
\end{equation}
where $H(X|B)  = H(\rho_{XB})-H(\rho_B)$, with $\rho_{XB}=(\XC \ot \IC)(\rho_{AB})$ and $\XC(\cdot)= \sum_j \dya{x_j}(\cdot)\dya{x_j}$ (similarly for $H(Z|B)$), and
\begin{equation}
\label{eqn23779af}
q' = \qmu+ \frac{1}{2}(1-\sqrt{\cmax})\log_2\frac{\cmax}{c_2},
\end{equation}
where $c_2$ is the second largest entry of the matrix $[c_{jk}]$.
\end{corollary}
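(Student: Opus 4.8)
The plan is to derive Corollary~\ref{thmMain230a} by specializing the stronger, state-dependent uncertainty relation established later (Sec.~\ref{apsct1}) to one concrete choice of its free parameters, and then checking that this choice collapses to the closed form~\eqref{eqn23779af}. It is worth first isolating the relative-entropy skeleton that underlies all such bounds. Writing $\XC$ for the pinching (measurement) channel in the $X$ basis on $A$, which is self-adjoint and idempotent, one gets $H(X|B)-H(A|B)=H(\rho_{XB})-H(\rho_{AB})=D(\rho_{AB}\|\rho_{XB})$, while a short computation gives $H(Z|B)=\log_2 d-D(\rho_{ZB}\|\tau_Z\ot\rho_B)$ with $\tau_Z$ the maximally mixed state on $Z$; hence
\begin{equation*}
H(X|B)+H(Z|B)-H(A|B)=\log_2 d+D(\rho_{AB}\|\rho_{XB})-D(\rho_{ZB}\|\tau_Z\ot\rho_B).
\end{equation*}
Applying data processing to $D(\rho_{AB}\|\rho_{XB})$ under the $Z$-pinching $\ZC\ot\IC$ turns the right-hand side into $\log_2 d+D(\rho_{ZB}\|\sigma_{ZB})-D(\rho_{ZB}\|\tau_Z\ot\rho_B)$, where $\sigma_{ZB}=\ZC\ot\IC(\rho_{XB})$ has $z_k$-block $\sigma_B^{(k)}=\sum_j c_{jk}\bra{x_j}\rho_{AB}\ket{x_j}$. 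Since both relative entropies are now between classical-quantum states with the same $Z$-register, the difference equals $\sum_k\Tr\!\big(\bra{z_k}\rho_{AB}\ket{z_k}[\log\rho_B-\log_2 d-\log\sigma_B^{(k)}]\big)$, and the whole problem reduces to the operator estimate $-\sum_k\Tr(\bra{z_k}\rho_{AB}\ket{z_k}\log\sigma_B^{(k)})\ge q'+H(\rho_B)$. The crude bound $\sigma_B^{(k)}\le\cmax\rho_B$ inserted here yields exactly $\qmu$, reproducing~\eqref{eqn23778}.

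To improve on $\qmu$ one must use that, apart from a single dominant overlap, all $c_{jk}$ are at most $c_2$: for each $z_k$, if $j^*(k)$ attains the $k$-th column maximum, then $\sigma_B^{(k)}\le c_2\rho_B+(\cmax-c_2)\bra{x_{j^*(k)}}\rho_{AB}\ket{x_{j^*(k)}}$. Feeding a refined splitting of this type into the estimate above — while keeping track of the fact that the blocks $\bra{z_k}\rho_{AB}\ket{z_k}$ and the operators $\sigma_B^{(k)}$ are built from the \emph{same} state $\rho_{AB}$, so that the correlations between the $X$- and $Z$-statistics are not discarded — is what produces the extra term in $q'$. I expect the cleanest route to the precise constants to be the Maassen--Uffink-type interpolation/operator-norm step of the general theorem applied to the overlap matrix $[\,\ip{x_j}{z_k}\,]$: separating its single entry of modulus $\sqrt{\cmax}$ from the remainder (entries of modulus $\le\sqrt{c_2}$) and controlling the two pieces separately leads to a complementarity factor $-\log_2\cmax+(1-\sqrt{\cmax})\cdot\tfrac12\log_2(\cmax/c_2)=q'$, the weight $\sqrt{\cmax}$ originating from the norm of the isolated piece. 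Because the argument stays at the operator level for $\rho_{AB}$, the term $+H(A|B)$ in~\eqref{eqn23779aaaaa} is carried along automatically.

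The main obstacle is exactly this refined estimate together with its exact constants: showing that isolating the dominant overlap and recombining the two pieces is legitimate and produces the weight $\tfrac12(1-\sqrt{\cmax})$ on the $c_2$-correction rather than merely the crude $\sigma_B^{(k)}\le\cmax\rho_B$. Since the improvement relies essentially on the correlation between the $X$- and $Z$-statistics of $\rho_{AB}$ — a bound that dropped this correlation, treating the two marginals as unrelated, could never exceed $\qmu$ — it is cleanest to prove it inside the more general POVM relation of Sec.~\ref{apsct1}, where the two measurements enter through a suitable interpolated divergence that never decouples them, and then read off Corollary~\ref{thmMain230a} as the rank-one-projective, parameter-fixed special case. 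Once the general bound is in place, verifying that the chosen parameters give precisely~\eqref{eqn23779af}, and that $q'>\qmu$ whenever $\cmax>c_2$ (equivalently, whenever the overlap matrix is not doubly degenerate at its maximum), is routine.
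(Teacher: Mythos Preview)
Your relative-entropy rewriting is fine and does recover $\qmu$ via the crude estimate $\sigma_B^{(k)}\le\cmax\rho_B$, but from that point on the proposal is a sketch rather than a proof: you explicitly flag that obtaining the precise weight $\tfrac12(1-\sqrt{\cmax})$ is ``the main obstacle'' and then defer it to an unspecified ``Maassen--Uffink-type interpolation/operator-norm step'' on the overlap matrix $[\ip{x_j}{z_k}]$. Splitting that matrix into a single dominant entry plus a remainder of norm $\le\sqrt{c_2}$ is not how the constants actually arise, and it is not clear such a matrix-level decomposition would control the relative entropies you need. Moreover, your route has already broken the $X\leftrightarrow Z$ symmetry (you pinch in $X$ first, then $Z$), whereas the factor $\tfrac12$ in $q'$ comes precisely from restoring that symmetry.

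The paper's argument is much more elementary and uses no interpolation. It passes through the state-dependent bound $q(\rho_A)=\max\{q(\rho_A,X,Z),q(\rho_A,Z,X)\}$ of Theorem~\ref{thmMain230}, and the two constants have separate, concrete origins. The $\tfrac12$ comes from replacing the maximum by the average, $q(\rho_A)\ge\tfrac12[q(\rho_A,X,Z)+q(\rho_A,Z,X)]$. After isolating the single pair $(\hat{j},\hat{k})$ attaining $\cmax$ and bounding every other row/column maximum by $c_2$, this average is at least $\qmu+\tfrac12\log_2(\cmax/c_2)\,[2-(p^x_{\hat{j}}+p^z_{\hat{k}})]$. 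The factor $(1-\sqrt{\cmax})$ then comes from the two-projector eigenvalue identity $\lambda_{\max}(\dya{x_{\hat{j}}}+\dya{z_{\hat{k}}})=1+|\ip{x_{\hat{j}}}{z_{\hat{k}}}|=1+\sqrt{\cmax}$, which bounds $p^x_{\hat{j}}+p^z_{\hat{k}}$ for every state $\rho_A$. Your intuition that one must not decouple the $X$- and $Z$-statistics is correct, but the only coupling used is exactly this: both probabilities are expectations in the same $\rho_A$, so their sum is dominated by an operator norm. No interpolated divergence or operator splitting of the overlap matrix is needed.
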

Notice that, for small $\cmax$, like in the case of almost complementary $X$ and $Z$, one has $q' \approx \log_2(1/\sqrt{\cmax c_2})$---to be compared with $\qmu= \log_2 (1/\cmax)$. So our bound nicely captures the importance of both $\cmax$ and $c_2$, i.e., takes into account more information about the relation between the two bases. Clearly $q'\geq \qmu$ in general. Furthermore $q'>\qmu$ iff there is exactly one pair $(\hat{j},\hat{k})$ such that $\cmax= c_{\hat{j}\hat{k}}$, with $\cmax<1$. In the special case where the system $A$ is a qubit, it is immediate to check that necessarily $\cmax =c_2$, hence $q'=\qmu$. However, for $d\geq 3$, we have $q' >\qmu$ for \emph{almost all} pairs of bases $(X,Z)$. Indeed, in $d\geq 3$ a typical unitary---seen here as the unitary that connects the two bases, i.e., $X=\{\ket{x_j}\}=\{U\ket{z_j}\}$---has $c_2 <  \cmax < 1$, see Sec.~\ref{apsct2.2}. We remark that, even for the simple improvement provided by Corollary~\ref{thmMain230a}, the gap between $q'$ and $\qmu$ can become arbitrarily large. In Sec.~\ref{apsct2.3}, we give an example where the gap $q'-\qmu$ diverges as the logarithm of the dimension $d$ of $A$~\footnote{This is the largest possible dependence on $d$, since $H(X|B)+H(Z|B)\leq 2 \log_2 d$.}.

We now state our main technical result, from which all of our other relations follow~(see Appendix~\ref{app:proofthm} for the proof). We first replace the bound $\qmu$ in \eqref{eqn23778} with a \emph{state-dependent} bound $q(\rho_A)$, and then define a new state-independent bound. 

\begin{theorem}
\label{thmMain230}
For any bipartite state $\rho_{AB}$, and any orthonormal bases $X = \{ \ket{x_j}\} $ and $Z = \{\ket{z_k} \}$ on $\HC_A$,
\begin{equation}
\label{eqn23779}
H(X|B)+H(Z|B) \geq q(\rho_A) + H(A|B),
\end{equation}
where, from $c_{jk}$ in \eqref{eqn23777b}, we define
\begin{subequations}
\label{eq:qprime}
\begin{align}
q(\rho_A) &= \max\{q(\rho_A, X,Z), q(\rho_A, Z,X)\},\\
q(\rho_A, X,Z) &= \sum_j p^x_j \log_2 (1/\max_k c_{jk} ),\\
q(\rho_A, Z,X) &= \sum_k p^z_k \log_2 (1/\max_j c_{jk} ).
\end{align}
\end{subequations}
Hence, the following state-independent bound holds:
\begin{equation}
\label{eq:stateindep}
H(X|B)+H(Z|B) \geq q+ H(A|B),\quad q = \min_{\rho_A} q(\rho_A).
\end{equation}
\end{theorem}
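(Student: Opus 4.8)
The plan is to prove Theorem~\ref{thmMain230} by reducing it to a single, sharper entropic estimate and then invoking known machinery for the quantum-memory setting. The key observation is that the state-dependent bound $q(\rho_A,X,Z) = \sum_j p^x_j \log_2(1/\max_k c_{jk})$ has the structure of an averaged version of the Maassen--Uffink constant: instead of bounding every $c_{jk}$ by the global maximum $\cmax$, it bounds the overlaps row-by-row. So the heart of the matter is a \emph{pointwise} (or block-wise) refinement of the operator inequality underlying \eqref{eqn23778}. Concretely, I would first recall that the Berta et al.\ relation \eqref{eqn23778} follows from a chain of inequalities involving the relative entropy / data-processing, or equivalently from the fact that $H(X|B) + H(Z|B) - H(A|B) = D(\rho_{ZB}\,\|\,\ZC\!\ot\!\IC(\rho_{XB}\text{-type object}))$-style terms plus a complementarity term; the cleanest route is through the ``tripartite'' or ``relative-entropy'' proof (Coles--Colbeck--Yu--Zwolak style), where one writes the left-hand side as a sum of a non-negative relative-entropy term and a term of the form $-\sum_j p^x_j \log_2(\text{something} \leq \max_k c_{jk})$.

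The key steps, in order, are: (i) Fix the ``outer'' basis, say $X$, and show the one-sided bound $H(X|B) + H(Z|B) \geq q(\rho_A,X,Z) + H(A|B)$; the symmetric statement with $Z$ outer gives $q(\rho_A,Z,X)$, and since the left-hand side is symmetric we may take the maximum, yielding \eqref{eq:qprime}. (ii) For the one-sided bound, introduce the pinching channels $\XC, \ZC$ and the state $\rho_{XB}$, and use the standard identity that rewrites $H(X|B)+H(Z|B)-H(A|B)$ in terms of a relative entropy $D\big(\rho_{ZB}\big\|\,\id_Z\!\otimes\!\rho_B\big)$-type quantity minus a correction; the correction is controlled by $\max_k c_{jk}$ for the $j$-th term because, after conditioning on outcome $x_j$, the post-measurement state of $A$ is $\dya{x_j}$ and its overlap with the $Z$-basis is exactly $\{c_{jk}\}_k$, whose largest element is $\max_k c_{jk}$. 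This is precisely where the improvement over $\cmax$ comes from: we never need to degrade $\max_k c_{jk}$ to $\max_{j,k}c_{jk}$. (iii) Pass to the state-independent bound \eqref{eq:stateindep} by noting $H(X|B)+H(Z|B) \geq q(\rho_A)+H(A|B) \geq \big(\min_{\rho_A} q(\rho_A)\big) + H(A|B)$, where the minimum is over density operators $\rho_A$ on $\HC_A$; since $q(\rho_A)$ depends on $\rho_A$ only through the probability vectors $(p^x_j)$ and $(p^z_k)$, which range over (a subset of) the simplex, this minimum is attained, and one may even restrict to pure $\rho_A$ by concavity/convexity considerations on each branch of the $\max$.

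The main obstacle I expect is step (ii): establishing the row-wise inequality in the full bipartite (quantum-memory) setting rather than just for classical $Y$. For classical side information one can argue almost combinatorially, but with a genuine quantum system $B$ one needs an operator inequality of the form $\ZC\!\otimes\!\IC\,(\dya{x_j}\!\otimes\!\sigma_B) \le (\max_k c_{jk})\,(\text{suitable reference})$ that survives the data-processing step uniformly in $j$; getting the weighting by $p^x_j$ correct requires care, since naively one would only recover $\cmax$. The trick is to perform the complementarity estimate \emph{inside} the sum over $j$ — i.e., to apply the bound to each block $\dya{x_j}$ separately before recombining — which is exactly what the relative-entropy formulation permits but the older operator-norm arguments do not. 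The remaining steps (symmetrisation, and the minimisation defining $q$) are then routine: finiteness of the minimum is immediate from compactness of the set of density operators, and the reduction to the probability simplices makes $q$ explicitly computable in principle.
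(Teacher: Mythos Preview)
Your proposal is correct and follows essentially the same route as the paper: start from the relative-entropy identity $H(Z|B)-H(A|B)=D(\rho_{AB}\|\ZC\otimes\IC(\rho_{AB}))$, apply data processing under the pinching $\XC\otimes\IC$, then use the block-diagonal operator inequality $\sum_{j,k}c_{jk}\dya{x_j}\otimes\Tr_A(\dya{z_k}\rho_{AB})\leq\sum_j(\max_k c_{jk})\dya{x_j}\otimes\rho_B$ inside the second argument of $D$ (this is your ``per-$j$'' estimate), and finally symmetrise in $X,Z$. The only minor point is that the paper writes the key identity as $H(Z|B)-H(A|B)=D(\cdot\|\cdot)$ rather than decomposing $H(X|B)+H(Z|B)-H(A|B)$ directly, so your step (ii) phrasing is slightly different in bookkeeping but identical in substance.
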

It is clear that $q(\rho_A)\geq \qmu$, since averaging over $j$ or $k$ gives a larger value than minimising. For $A$ a qubit ($\ddd=2$), we have that $\max_k c_{jk} $ is independent of $j$ and hence $q(\rho_A)=\qmu$. But $q(\rho_A)\geq q'$ (see the proof of Cor.~\ref{thmMain230a}), so that, for $\ddd \geq3$, $q(\rho_A)\geq q' > \qmu$ for all states, for almost all choices of $X$ and $Z$.
Hence the lower bound $q$ of \eqref{eq:stateindep} is an improvement over $\qmu$.
By using the minimax theorem, see Sec.~\ref{apsct2.1}, we obtain
\begin{equation}
\label{eqn237qprime}
q = \max_{ 0 \leq p \leq 1} \lm_{\min} [ \Dl(p)],
\end{equation}
where $\lm_{\min}[\cdot]$ denotes the minimum eigenvalue and $\Dl(p) = p\Dl_{XZ}+(1-p)\Dl_{ZX}$, with $\Dl_{XZ}=\sum_j   \log_2 (1/\max_k c_{jk} )\dya{x_j}$ and $\Dl_{ZX}=\sum_k   \log_2 (1/\max_j c_{jk} )\dya{z_k}$. Thus, computing $q$ can be done by finding the minimum eigenvalue of particular matrices, a straightforward numerical calculation. Furthermore, by setting $p=1/2$ in \eqref{eqn237qprime} one can get a bound \emph{still certified to be at least as large as $q'\geq \qmu$}. In general, we have
\[
q\geq \lm_{\min}[\Dl(1/2)]\geq q' \geq \qmu.
\]
\begin{example}
\label{Ex1}
Let $\ddd=3$, $Z = \{\ket{0},\ket{1},\ket{2}\}$, and $X = \{U \ket{0},U \ket{1},U \ket{2}\}$, with
$$U=\left(\begin{smallmatrix}
     1/\sqrt{3} &  1/\sqrt{3}& 1/\sqrt{3}  \\
    1/\sqrt{2}  & 0 & -1/\sqrt{2}\\
    1/\sqrt{6} & -\sqrt{2/3} &1/\sqrt{6} 
\end{smallmatrix}\right).$$
We have $\qmu = \log_2 (3/2)\approx 0.58$, $q' \approx 0.62$, $\lm_{\min}[\Dl(1/2)]\approx 0.64$, and $q \approx 0.64$. Furthermore, our state-dependent bound is often \textit{much} better than $\qmu$: if the reduced state is maximally mixed then $q(\id/3) = (2/3) \log_2 3 \approx 1.06$, while numerically averaging over all pure states gives $\langle q(\ket{\psi})\rangle_{\ket{\psi}} \approx 1.07$.
\end{example}

Other attempts have been made to strengthen Eq.~\eqref{eqn23778} \cite{PatiEtAlPRA2012, TomamichelThesis2012, TomHang2013}, or the less general relation Eq.~\eqref{eqn23777a} \cite{deVicSanRuizPRA2008,FriedGheorGour2013arxiv,PuchalaEtAl2013}. Refs.~\cite{FriedGheorGour2013arxiv,PuchalaEtAl2013} took a majorisation approach; however, their bounds can be weaker than \eqref{eqn23777a} when $X$ and $Z$ have a large $\qmu$ value. Ref.~\cite{PatiEtAlPRA2012} added a term to the r.h.s.\ of \eqref{eqn23778} that depends on the quantum discord~\cite{ModiEtAlRevModPhys.84.1655} of the state $\rho_{AB}$; that same term (see \cite{PatiEtAlPRA2012}) can be added to the r.h.s.\ of our result \eqref{eqn23779} if one wishes. Ref.~\cite{TomamichelThesis2012} (Ch.~7) replaced $\qmu$ in \eqref{eqn23778} with a state-dependent bound $\hat{q}(\rho_A)$, like we did in \eqref{eqn23779}; however in their case they have $\min_{\rho_A} \hat{q}(\rho_A) = \qmu$, so unlike our result it does not lead to a strengthened state-independent bound.

\section{Improved information exclusion relation}\label{sct3}

As a corollary of \eqref{eqn23779}, we prove Grudka et al.'s conjectured information exclusion relation~\cite{GrudkaEtAl2012}. Furthermore, we actually strengthen their bound and extend it to the case of quantum memory. In order to fully appreciate this, let us first consider the extension of Hall's result to the case of quantum memory, i.e., we replace the classical system $Y$ with a general quantum system $B$. A corollary of \eqref{eqn23778} is:
\begin{equation}
\label{eqn23784}
I(X\colo B)+I(Z\colo B) \leq \rhall - H(A|B).
\end{equation}
Improving \eqref{eqn23782}, this result allows for entanglement between $A$ and $B$. It says that the trade-off in correlations is weakened if $H(A|B)$ is negative, i.e. if $\rho_{AB}$ is strongly entangled. After all, in the maximally entangled case, $I(X\colo B)=I(Z\colo B)=\log_2 \ddd$, so in such a case the bound on the r.h.s.\ must be no smaller than $2 \log_2 \ddd$.

Now consider the following information exclusion relation, a corollary of our uncertainty relation \eqref{eqn23779}.
\begin{corollary}
\label{thmMain231}
For any bipartite state $\rho_{AB}$,
\begin{equation}
\label{eqn23785}
I(X\colo B)+I(Z\colo B) \leq r - H(A|B),
\end{equation}
with
\begin{subequations}
\label{eqn237431}
\begin{align}
r&= \min \{r(X,Z), r(Z,X) \},\\
r(X,Z)&= \log_2 (\ddd \sum_j \max_k c_{jk}),\\
r(Z,X)&= \log_2 (\ddd \sum_k \max_j c_{jk}).
\end{align}
\end{subequations}
\end{corollary}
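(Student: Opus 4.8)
The plan is to derive the information exclusion relation directly from the improved uncertainty relation of Theorem~\ref{thmMain230}, mimicking the standard passage from entropic uncertainty relations to information exclusion relations but now keeping track of the refined, state-dependent quantity $q(\rho_A)$. First I would rewrite each mutual information term using $I(X\colo B) = H(X) - H(X|B)$, where $H(X) = H(\rho_X)$ is the entropy of the measured state $\rho_X = \XC(\rho_A)$, and similarly for $Z$. Adding the two identities gives
\begin{equation}
\label{eqn:IEproofstart}
I(X\colo B) + I(Z\colo B) = H(X) + H(Z) - \bigl[H(X|B) + H(Z|B)\bigr].
\end{equation}
Into the bracketed term I would substitute the lower bound from \eqref{eqn23779}, namely $H(X|B)+H(Z|B) \geq q(\rho_A) + H(A|B)$, which yields
\begin{equation}
\label{eqn:IEproofmid}
I(X\colo B) + I(Z\colo B) \leq H(X) + H(Z) - q(\rho_A) - H(A|B).
\end{equation}
So the whole problem reduces to showing that $H(X) + H(Z) - q(\rho_A) \leq r$ for every state $\rho_A$, with $r$ as in \eqref{eqn237431}; the $-H(A|B)$ term then carries over untouched to give \eqref{eqn23785}.

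The core estimate is therefore state-independent in spirit but must be proved pointwise in $\rho_A$. Fix the choice achieving the minimum in $q(\rho_A)$, say $q(\rho_A) \geq q(\rho_A,X,Z) = \sum_j p^x_j \log_2(1/\max_k c_{jk})$ (the other case is symmetric and will produce the $r(Z,X)$ branch of the min). Then
\[
H(X) + H(Z) - q(\rho_A,X,Z) \leq H(Z) + \sum_j p^x_j \log_2\bigl(\max_k c_{jk}\bigr) - \sum_j p^x_j \log_2 p^x_j .
\]
The two sums combine into $-\sum_j p^x_j \log_2\bigl(p^x_j / \max_k c_{jk}\bigr)$, which I would bound using the nonnegativity of relative entropy (equivalently, the log-sum inequality): for any probability distribution $\{p^x_j\}$ and any nonnegative weights $w_j := \max_k c_{jk}$, one has $-\sum_j p^x_j \log_2 (p^x_j/w_j) \leq \log_2 \sum_j w_j$. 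Hence this expression is at most $\log_2\bigl(\sum_j \max_k c_{jk}\bigr)$. It remains to handle $H(Z) \leq \log_2 \ddd$, the trivial dimension bound on the entropy of a distribution on $\ddd$ outcomes. Combining, $H(X)+H(Z)-q(\rho_A,X,Z) \leq \log_2 \ddd + \log_2\bigl(\sum_j \max_k c_{jk}\bigr) = r(X,Z)$, and symmetrically the other branch gives $r(Z,X)$; taking the better of the two and noting $q(\rho_A) = \max\{\cdot,\cdot\}$ is at least either term, we get $H(X)+H(Z)-q(\rho_A) \leq r$ for all $\rho_A$.

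I expect the only subtle point to be the direction in which the two ``halves'' of $q(\rho_A)$ are used: because $q(\rho_A)$ is a \emph{max} of the two averaged quantities, bounding $-q(\rho_A)$ from above requires choosing, for each $\rho_A$, whichever of $q(\rho_A,X,Z)$ or $q(\rho_A,Z,X)$ is smaller — but that smaller one is still at least $\qmu$, and more importantly the log-sum step works verbatim with either basis playing the role of the ``measured'' one, so one cleanly lands on $\min\{r(X,Z),r(Z,X)\}$ rather than something weaker. A secondary check is that the bound $\sum_j \max_k c_{jk}$ indeed recovers (and improves on) the Grudka et al.\ quantity $\sum_{\text{$\ddd$ largest}} c_{jk}$: since each column of the doubly (sub)stochastic-type matrix $[c_{jk}]$ contributes its row-maxima, one has $\sum_j \max_k c_{jk} \leq \sum_{\text{$\ddd$ largest}} c_{jk}$ is \emph{not} automatic and in fact the inequality can go either way; the honest statement is simply that $r$ as defined is the bound our method delivers, it reduces to Hall's $\rhall$ via $\sum_j \max_k c_{jk} \leq \ddd\,\cmax$, and in the classical-$B$ case it is at least as strong as — and generically stronger than — \eqref{eqn23782bb}. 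Finally, for the quantum-memory content there is nothing extra to prove: the $H(A|B)$ term is inherited directly from Theorem~\ref{thmMain230}, so the extension beyond classical $Y$ is automatic once \eqref{eqn:IEproofmid} is in hand.
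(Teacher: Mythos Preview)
Your core argument is correct and matches the paper's proof: rewrite $I(\cdot\colo B)=H(\cdot)-H(\cdot|B)$, insert the state-dependent uncertainty relation \eqref{eqn23779}, bound $H(Z)\leq\log_2 d$, and use concavity of the logarithm to get $H(X)-q(\rho_A,X,Z)\leq\log_2\sum_j\max_k c_{jk}$.

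Two points deserve correction, though. First, your ``subtle point'' paragraph is confused in a way that makes the argument look harder than it is. Since $q(\rho_A)=\max\{q(\rho_A,X,Z),q(\rho_A,Z,X)\}$, the inequality $q(\rho_A)\geq q(\rho_A,X,Z)$ holds \emph{always}, not just for the smaller branch; hence $H(X)+H(Z)-q(\rho_A)\leq r(X,Z)$ and, by the symmetric computation, $\leq r(Z,X)$ both hold for every $\rho_A$, and you simply take the minimum. There is no need to ``choose whichever is smaller'' state by state. Second, your aside that $\sum_j\max_k c_{jk}\leq\sum_{\text{$d$ largest}}c_{jk}$ ``is not automatic and can go either way'' is wrong: the $d$ numbers $\{\max_k c_{jk}\}_j$ are $d$ distinct entries of the matrix $[c_{jk}]$ (one per row), so their sum is automatically at most the sum of the $d$ largest entries. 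This is exactly how the paper derives $r\leq r_{\textup{G}}$ in the next corollary.
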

\begin{proof}
Write $H(X|B) = H(X)- I(X\colo B)$ (similarly for $H(Z|B)$), rearrange \eqref{eqn23779}, and use $H(Z)\leq \log_2 \ddd$ to get
$$I(X\colo B)+I(Z\colo B) \leq \log_2 \ddd + H(X) - q(\rho_A, X,Z) - H(A|B).$$
Now, 
$H(X) - q(\rho_A, X,Z) = \sum_j p^x_j \log_2 (\max_k c_{jk} /p^x_j ) \leq \log_2 ( \sum_j \max_k c_{jk})$,
where we used the concavity of the log. Bringing $\ddd$ inside the log completes the proof. A similar bound holds when interchanging $X$ and $Z$.\end{proof}

This allows us to conclude
\begin{corollary}
\label{thmMain232}
Grudka et al.'s conjecture, \eqref{eqn23782bb}, is true.
\end{corollary}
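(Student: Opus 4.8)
The plan is to derive Grudka et al.'s bound \eqref{eqn23782bb} as an immediate consequence of Corollary~\ref{thmMain231}, by specialising to the case of a classical register and comparing the complementarity factors. First I would observe that the classical register $Y$ in \eqref{eqn23782bb} is a special case of the quantum memory $B$ in \eqref{eqn23785} (any classical system is, in particular, a quantum system), and that when $\rho_{AB}$ arises from measuring $A$ and recording the outcome in a classical $Y$, the state is separable, so $H(A|B)\geq 0$. Hence \eqref{eqn23785} gives $I(X\colo Y)+I(Z\colo Y)\leq r - H(A|B)\leq r$.

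The remaining step is purely combinatorial: show $r\leq \rgal$, where $r$ is defined in \eqref{eqn237431} and $\rgal = \log_2(\ddd\cdot\sum_{\text{d largest}} c_{jk})$. It suffices to bound one of the two terms in the $\min$, say $r(X,Z)=\log_2(\ddd\sum_j \max_k c_{jk})$. The key point is that $\sum_j \max_k c_{jk}$ is a sum of $\ddd$ entries of the matrix $[c_{jk}]$ — one from each row $j$ — and therefore cannot exceed the sum of the $\ddd$ largest entries of the whole matrix, i.e. $\sum_j \max_k c_{jk}\leq \sum_{\text{d largest}} c_{jk}$. Since $\log_2$ is monotone and $\ddd>0$, this yields $r(X,Z)\leq\rgal$, hence $r\leq\rgal$, and combining with the previous paragraph gives $I(X\colo Y)+I(Z\colo Y)\leq r\leq\rgal$, which is exactly \eqref{eqn23782bb}.

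I do not expect a serious obstacle here: Corollary~\ref{thmMain231} already does all the analytic work, and what is left is the elementary inequality $\sum_j\max_k c_{jk}\leq\sum_{\text{d largest}}c_{jk}$ together with the observation that $H(A|B)\geq0$ for a classical register. The only point requiring a line of care is making the reduction explicit — namely that Hall's setup with a classical $Y$ genuinely embeds into the quantum-memory setup with $H(A|B)\geq0$, so that the $-H(A|B)$ term can simply be dropped to recover a bound of the form \eqref{eqn23782bb}. One might also remark that our result is in fact strictly stronger than the conjecture, both because $r$ can be smaller than $\rgal$ (the row-wise maxima may repeat a column, wasting ``budget'' relative to the unrestricted $\ddd$ largest entries, but never exceed it) and because of the extra $-H(A|B)$ improvement in the quantum-memory case; but for the statement of Corollary~\ref{thmMain232} the chain $I(X\colo Y)+I(Z\colo Y)\leq r\leq\rgal$ is all that is needed.
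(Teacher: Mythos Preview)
Your proposal is correct and follows essentially the same route as the paper: apply Corollary~\ref{thmMain231} with $B=Y$ classical, use $H(A|Y)\geq 0$, and then invoke the elementary combinatorial fact that $\sum_j \max_k c_{jk}$ is a sum of $d$ entries of $[c_{jk}]$ and hence is at most $\sum_{\text{d largest}} c_{jk}$, giving $r\leq r(X,Z)\leq \rgal$. One small wording quibble: the state $\rho_{AY}$ in Hall's setup is quantum--classical because $Y$ is a classical register correlated with $A$ (not because $A$ has been measured), but your conclusion that such a state is separable and hence has $H(A|Y)\geq 0$ is of course unaffected.
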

\begin{proof}
Consider the $d$ different terms $\{\max_k c_{jk}\}_j$ appearing in $r(X,Z)$; these may not be the $d$ largest terms of the matrix $[c_{jk}]$, hence summing over them is smaller than computing $\sum_{\text{d largest}} c_{jk} $. So $r(X,Z) \leq \rgal$ (see \eqref{eqn23782bb}), thus $r \leq \rgal$. Also, if we set $B=Y$, where $Y$ is classical, then we have $H(A|Y) \geq 0$. Combining this with $r \leq \rgal$ and \eqref{eqn23785} proves \eqref{eqn23782bb}. \end{proof}

We emphasise that Eq.~\eqref{eqn23785} goes well beyond Grudka et al.'s conjecture: it strengthens \eqref{eqn23782bb} by replacing $\rgal$ by $r$, and it generalises the result to the case of quantum memory, allowing for arbitrary (possibly non-classical) correlations between $A$ and $B$. In general, we have
\[
r\leq \rgal \leq \rhall.
\]
In the qubit case ($\ddd=2$), we have equality $r = \rgal= \rhall$. To see a case where all three are different, consider the qutrit example given in Ex.~\ref{Ex1}. In this case we have $\rhall= \log_2 6$, $\rgal = \log_2 5$, and $r = \log_2 (9/2)$. Note that $r$ can be calculated analytically given the coefficients $c_{jk}$ of \eqref{eqn23777b}.

\section{Uncertainty versus information} \label{sct4}

One key conceptual insight of our work is to draw a distinction between the complementarity of uncertainty and the complementarity of information. The factor $\cmax$ naturally appears---via $\qmu=\log_2(1/\cmax)$---in uncertainty relations like \eqref{eqn23777a} and \eqref{eqn23778}. But we should not expect it to be the right factor to capture the complementarity of information. While our work shows that uncertainty relations can be improved  by replacing $\qmu$ with $q$ as in \eqref{eq:stateindep}, a much more dramatic improvement is given by replacing $\rhall = \log_2(d^2 \cdot \cmax)$ with $r$, i.e., going from the information exclusion relation~\eqref{eqn23784} to~\eqref{eqn23785}. Indeed, in order to obtain a state-independent bound for uncertainty relations, we must consider the subspace with the least complementarity. On the other hand, in information exclusion relations it is the overall complementarity, i.e.\ with respect to the various subspaces that compose the space, that matters. The reason our approach is better suited to capture information complementarity is that $r$ measures the overall complementarity, averaged over the whole space, of $X$ and $Z$. Notice that to obtain our improved \emph{state-independent} information exclusion relation of Corollary~\ref{thmMain231} we had to tap into the strength of our \emph{state-dependent} uncertainty relation of Theorem~\ref{thmMain230}. Finally, to better appreciate the difference between the complementarity of uncertainty and the complementarity of information, it is instructive to consider the conditions under which our state-independent bounds become trivial, i.e., $q= 0$ and $r= 2 \log_2 d$. Let $U$ be the unitary relating $X$ and $Z$; we have $q= 0 $ iff at least one entry of $U$ has magnitude 1. In contrast, $r = 2 \log_2 d$ iff $U$ is of the form $U = \sum_j e^{i\phi_j} \dyad{P(j)}{j}$ for some permutation function $P$ and phase factors $e^{i\phi_j}$. These are vastly different conditions, with the latter one implying that $U$ must be trivial \textit{over the entire space}, whereas the former condition says that only one row or column of $U$ need be trivial.

\section{Applications} \label{sct5}

The relevance of \eqref{eqn23778} for  witnessing of entanglement (WoE) and  security analysis for quantum key distribution was discussed in \cite{BertaEtAl} and implemented experimentally for WoE in \cite{PHCFR, LXXLG}. Since our bound Eq.~\eqref{eqn23779} is an improvement over \eqref{eqn23778}, it enables a tighter analysis. To use our bound $q(\rho_A)$ the only information about $\rho_A$ needed is the probability distributions $\{p^x_j\}$ and $\{p^z_k\}$. In the case of WoE using the uncertainty relation with quantum memory as in \cite{PHCFR, LXXLG}, Alice already determines these probability distributions experimentally, so no extra effort is needed to use our bound. 

Likewise, our Eq.~\eqref{eqn23785} is relevant to \textit{witnessing of good quantum channels} \cite{ChristWinterIEEE2005, ColesEtAl}. Consider a channel $\EC$ from Alice to Bob. To show that $\EC$ is good, Alice can send the $X$ basis states with equal probability through $\EC$, and Bob measures the output in basis $X_B$. Alice does the same for $Z$ and Bob measures $Z_B$. They compare their results over a classical communication line and estimate $I(X\colo X_B )$ and $I(Z\colo Z_B )$. Then they can lower bound the quantum capacity of $\EC$, denoted $Q(\EC)$, using
\begin{equation}
\label{eqn385792}
Q(\EC) \geq I(X\colo X_B)+I(Z\colo Z_B)  - r,
\end{equation}
which follows from applying \eqref{eqn23785} to $\rho_{AB} = (\IC \ot \EC)(\dya{\phi})$ where $\ket{\phi}$ is maximally entangled, and using $Q(\EC)\geq -H(A|B)$ \cite{Lloyd97}. Thus, showing that $\EC$ has a positive quantum capacity amounts to showing that the r.h.s.\ of \eqref{eqn385792} is positive.

Closely related to quantum cryptography are ideas of monogamy or decoupling, whereby strong quantum correlations between $A$ and $B$ guarantee weak correlations between $A$ and any third system $C$. Equation~\eqref{eqn23778} has been used \cite{RenesBoileau, ColesEtAl} to give sufficient conditions for which $C$ is decoupled from $A$, in terms of Bob's uncertainty about $X$ and $Z$. Our results, \eqref{eqn23779} and \eqref{eqn23785}, allow these quantitative statements of monogamy to be tightened.

\section{Generalisation to POVMs}\label{apsct1}

\subsection{Results in tripartite form}\label{apsct1.1}

Our previous results can be rewritten in a form that considers a tripartite state on $ABC$ rather than a bipartite state on $AB$. The tripartite formulation is equivalent to the bipartite one, i.e., one formulation implies the other \cite{BertaEtAl,ColesEtAl}. In what follows, we state the tripartite formulation of our results since this form allows us to generalise our results to POVMs in a straightforward way.

Our first main result was Eq.~\eqref{eqn23779}. This says that, for any tripartite state $\rho_{ABC}$ and any orthonormal bases $X = \{ \ket{x_j}\} $ and $Z = \{\ket{z_k} \}$ on $\HC_A$,
\begin{equation}
\label{eqn23779app}
H(X|B)+H(Z|C) \geq \qq (\rho_A)
\end{equation}
where $\qq (\rho_A)$ was defined in \eqref{eq:qprime}. Notice that the term $H(A|B)$ that appeared in (8) has now disappeared since we have changed $H(Z|B)$ to $H(Z|C)$. 

Our second main result was Eq.~\eqref{eqn23785}. It says that, for any tripartite state $\rho_{ABC}$ and any orthonormal bases $X = \{ \ket{x_j}\} $ and $Z = \{\ket{z_k} \}$ on $\HC_A$,
\begin{equation}
\label{eqn23785app}
I(X\colo B)+I(Z\colo C) \leq \rr ,
\end{equation}
where $\rr$ was defined in \eqref{eqn237431}.

We used our second result to prove a conjecture by Grudka et al. \cite{GrudkaEtAl2012}, which strengthened Hall's information exclusion principle \cite{Hall1}. Hall's scenario considered the case where $Y$ is a classical register and we want to bound the sum $I(X\colo Y)+I(Z\colo Y)$. Our second result implied the following bound on this sum:
\begin{equation}
\label{eqn2958383app}
I(X\colo Y)+I(Z\colo Y) \leq \rr ,
\end{equation}
which in turn implied Grudka et al.'s conjecture.

In what follows, we will generalise all of these results, Eqs.~\eqref{eqn23779app}, \eqref{eqn23785app}, and \eqref{eqn2958383app}, to the case where $X$ and $Z$ are arbitrary POVMs (assuming they contain a finite number of POVM elements) on system $A$.

\subsection{Notation for POVMs}\label{apsct1.2}

In the general case where $X = \{X_j\}$ and $Z = \{Z_k\}$ are POVMs on $A$, we consider the isometries $V_X\colo \HC_A \to \HC_{XX'A}$ and $V_Z\colo \HC_A \to \HC_{ZZ'A}$ defined by \cite{TomRen2010}
\begin{subequations}
\label{eqnisometry11}
\begin{align}
V_X = \sum_j \ket{j}_X\ot \ket{j}_{X'}\ot \sqrt{X_j},\\
V_Z = \sum_k \ket{k}_Z\ot \ket{k}_{Z'}\ot \sqrt{Z_k},
\end{align} 
\end{subequations}
where $\ket{j}$ and $\ket{k}$ are elements of the standard (orthonormal) basis on the appropriate spaces. For some initial tripartite state $\rho_{ABC}$ we denote the alternative post-measurement states as:
\begin{subequations}
\label{eqnPM11}
\begin{align}
\rhoh_{XX'ABC} = V_X\rho_{ABC}V_X\ad ,\\
\rhob_{ZZ'ABC} = V_Z\rho_{ABC}V_Z\ad .
\end{align}
\end{subequations}
Then we define
\begin{subequations}
\label{eqnCE11}
\begin{align}
H(X|B) = H(\rhoh_{XB})-H(\rho_B) ,\\
H(Z|C) = H(\rhob_{ZC})-H(\rho_C) ,
\end{align}
\end{subequations}
which are the conditional entropies of the classical quantum states $\rhoh_{XB} = \Tr_{X'AC}(\rhoh_{XX'ABC})$ and $\rhob_{ZC} = \Tr_{Z'AB}(\rhob_{ZZ'ABC})$, respectively. For example, notice that we can write
\begin{align}
\rhoh_{XB} &= \sum_j \dya{j}_X \ot \Tr_A(X_j\rho_{AB})\notag \\
\label{eqnXchannel11}&=(\XC \ot \IC)(\rho_{AB})
\end{align}
for the quantum channel $\XC:\rho_A\mapsto\sum_j\dyad{j}{j}_X\Tr(X_j\rho_A)$. Also, we denote the probabilities associated with these two POVMs as $p^x_j = \Tr(X_j \rho_A)$ and $p^z_k = \Tr(Z_k \rho_A)$.

\subsection{Uncertainty relation for POVMs}\label{apsct1.3}

Generalising the results to POVMs essentially amounts to finding an appropriate generalisation of the complementarity factor that appears in our bounds, such as $\qq (\rho_A)$ and $\rr $. In what follows, we will use the factors:
\begin{subequations}
\label{eqn79}
\begin{align}
h_j(X,Z) &=  \Big\| \sum_k Z_k X_j Z_k \Big\|_{\infty} , \\ 
h_k(Z,X) &=  \Big\| \sum_j X_j Z_k X_j \Big\|_{\infty} ,
\end{align} 
\end{subequations}
where the infinity norm (or operator norm) $\| M \|_{\infty}$ is the largest singular value of $M$, or in the case of \eqref{eqn79} it is the largest eigenvalue since the arguments are positive semi-definite matrices. We discuss in the next subsection why we chose this complementarity factor - the reason being that it gives a stronger bound than an alternative, as discussed below.

Now we generalise \eqref{eqn23779app} to the case of arbitrary POVMs with the following result, proved in App.~\ref{app:proofthm2}.

\begin{theorem}
\label{thmUncRelApp}
Let $X = \{X_j\}$ and $Z = \{Z_k\}$ be arbitrary POVMs on $A$. Then for any tripartite state $\rho_{ABC}$,
\begin{equation}
\label{eqn23779appPOVM}
H(X|B)+H(Z|C) \geq \qq (\rho_A)
\end{equation}
where we define
\begin{subequations}
\label{eqnqprimerhoaApp}
\begin{align}
\qq (\rho_A) &= \max\{\qq (\rho_A, X,Z), \qq (\rho_A, Z,X)\},\\
\qq (\rho_A, X,Z) &= - \sum_j p^x_j \log_2 h_j(X,Z),\\
\qq (\rho_A, Z,X) &= - \sum_k p^z_k \log_2 h_k(Z,X).
\end{align}
\end{subequations}
\end{theorem}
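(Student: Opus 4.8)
The plan is to reduce the POVM statement to the projective case already handled by Theorem~\ref{thmMain230} (equivalently, its tripartite form~\eqref{eqn23779app}) by using the Naimark-type isometries $V_X$ and $V_Z$ of~\eqref{eqnisometry11}. First I would observe that the isometry $V_X$ dilates the POVM $X$ to a projective measurement on the larger space $\HC_{XX'A}$: the projectors $\Pi_j = \dya{j}_X \ot \id_{X'} \ot \id_A$ on the image of $V_X$, pulled back, reproduce $X_j$, and similarly $\tilde\Pi_k$ for $Z$ via $V_Z$ on $\HC_{ZZ'A}$. The subtlety is that $V_X$ and $V_Z$ map into \emph{different} spaces, so one cannot directly apply a single two-basis uncertainty relation. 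The standard trick (cf.~\cite{TomRen2010,ColesEtAl}) is to combine them: consider the isometry $W = (V_X \ot \id) \oplus \dots$ — more precisely, one works on $\HC_{A}$ enlarged by ancillas, applies $V_Z$ first to get $\rhob$ on $ZZ'ABC$, then regards $Z$-measurement as reading off the (now classical) $Z$ register, while the $X$ POVM on the original $A$ becomes, after composing with $V_Z\ad$, a POVM whose projective dilation lives on yet another ancilla. Rather than iterate dilations, the cleaner route is: apply the tripartite projective relation~\eqref{eqn23779app} on the space $\HC_{X'A}$ (or $\HC_{Z'A}$) where the dilated measurements \emph{are} projective, with the roles of the memory systems $B,C$ augmented to include the measurement-copy registers.

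Concretely, the key steps are: (i) Fix the state $\sigma = V_Z \rho_{ABC} V_Z\ad = \rhob_{ZZ'ABC}$. On system $Z$ (a copy register) perform the projective measurement in the standard basis; this realises $H(Z|C)$ exactly as in~\eqref{eqnCE11}. (ii) On the remaining system $Z'A$, the original $X$-POVM acts only on $A$, i.e.\ as $\id_{Z'}\ot X_j$; dilate \emph{this} by its own isometry $V_X$ into $\HC_{XX'Z'A}$, obtaining the projective measurement on register $X$. (iii) Now apply the projective tripartite relation~\eqref{eqn23779app} to the bases $\{X\text{-register basis}\}$ and $\{Z\text{-register basis}\}$ on the composite ``$A$-system'' $Z'A$ (enlarged appropriately), with the two memories being $B$ and $C$. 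The complementarity factor in~\eqref{eqn23779app} is $q(\rho_A, \cdot, \cdot) = \sum_j p_j \log_2(1/\max_k c_{jk})$; for \emph{projective} measurements one has $\max_k c_{jk} = \|\sum_k \tilde\Pi_k \Pi_j \tilde\Pi_k\|_\infty$ when the $\tilde\Pi_k$ are rank-one, and this identity is exactly what degenerates, in the dilated picture, to $h_j(X,Z) = \|\sum_k Z_k X_j Z_k\|_\infty$ once one pushes the projectors back down through $V_Z$ (the $X',Z'$ copy registers being traced out). (iv) Track the entropies: the entropy of the $X$-register given $B$ equals $H(X|B)$ by~\eqref{eqnCE11}, likewise for $Z$ and $C$, because the copy registers $X',Z'$ are discarded and isometries preserve entropy. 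Assembling these gives~\eqref{eqn23779appPOVM} with $q(\rho_A)$ as in~\eqref{eqnqprimerhoaApp}, and taking the max over the two orderings $X,Z$ versus $Z,X$ is automatic since the construction is symmetric in which POVM we dilate first.

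The main obstacle I anticipate is step~(iii)–(iv): verifying that the projective complementarity coefficient $c_{\hat j \hat k}$ on the \emph{dilated} space, when one has \emph{genuine POVMs} rather than rank-one projective measurements, collapses to precisely $h_j(X,Z) = \|\sum_k Z_k X_j Z_k\|_\infty$ and not to some larger quantity. This requires being careful that $\max_k c_{jk}$ for the dilated projective measurements — where the $Z$-dilation projectors $\tilde\Pi_k$ need \emph{not} be rank one — must be replaced by the appropriate operator-norm expression $\|\sum_k \tilde\Pi_k \Pi_j \tilde\Pi_k\|_\infty$; indeed the truly general projective relation one should invoke is the operator-norm form, so strictly one proves the projective case in the form $H(X|B)+H(Z|C)\geq -\sum_j p_j^x \log_2 \|\sum_k \Pi_k X_j \Pi_k\|_\infty$ first (a mild restatement of Theorem~\ref{thmMain230}), and then the POVM case follows by the dilation bookkeeping with $\sqrt{Z_k}\sqrt{Z_k} = Z_k$ absorbing the $X',Z'$ traces. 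The other place demanding care is ensuring the memory systems $B$ and $C$ genuinely stay separate throughout the two successive dilations so that no unwanted correlations are introduced; this is guaranteed because each dilation isometry acts only on (a subsystem of) $A$ together with a fresh ancilla, commuting with everything on $B$ and $C$.
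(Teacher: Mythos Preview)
Your dilation strategy has a genuine gap that is not merely a bookkeeping issue. After you apply \emph{both} Naimark isometries, the projectors $\Pi_j^X = \dya{j}_X \ot \id_{X'ZZ'A}$ and $\Pi_k^Z = \dya{k}_Z \ot \id_{XX'Z'A}$ act on \emph{disjoint} tensor factors of the enlarged ``$A$-system'' $XX'ZZ'A$ and therefore \emph{commute}. Consequently $\sum_k \Pi_k^Z \Pi_j^X \Pi_k^Z = \Pi_j^X$, so $\|\sum_k \Pi_k^Z \Pi_j^X \Pi_k^Z\|_\infty = 1$ for every $j$, and whatever uncertainty relation you apply on the enlarged space yields the trivial bound $q=0$. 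Restricting to the range of the combined isometry $W$ does not rescue this: on that range the compressed operators $W\ad \Pi_j^X W = X_j$ and $W\ad \Pi_k^Z W = Z_k$ are again the original POVM elements, not projectors, so you are back where you started.

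Your escape hatch---``strictly one proves the projective case in the form $H(X|B)+H(Z|C)\geq -\sum_j p_j^x \log_2 \|\sum_k \Pi_k X_j \Pi_k\|_\infty$ first (a mild restatement of Theorem~\ref{thmMain230})''---is not mild: Theorem~\ref{thmMain230} is stated and proved only for \emph{rank-one} projectors (orthonormal bases), where $\|\sum_k \Pi_k X_j \Pi_k\|_\infty = \max_k c_{jk}$ trivially. Extending it to higher-rank projectors with the operator-norm complementarity factor is precisely the content of Theorem~\ref{thmUncRelApp} specialised to projective POVMs, and it requires the same machinery; invoking it makes the argument circular. The paper instead proves Theorem~\ref{thmUncRelApp} directly, without any reduction: it uses Lemma~\ref{thm3875} (a consequence of strong subadditivity) to get $H(Z|C)\geq D(\rho_{AB}\,\|\,\sum_k Z_k\rho_{AB}Z_k)$, applies data processing for the channel $\XC$, and then uses the operator inequality $\sum_k Z_k X_j Z_k \leq h_j(X,Z)\,\id$ together with the monotonicity $D(S\|T)\geq D(S\|T')$ for $T'\geq T$. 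This route never touches the rank-one case and handles arbitrary POVMs in one pass.
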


Notice that our definition of $\qq (\rho_A)$ reduces to that given in \eqref{eq:qprime} when we specialise to the case of orthonormal bases (in other words, rank-one projective POVMs). This is because, when $Z$ is projective, then $h_j(X,Z) = \max_k \| Z_k X_j Z_k \|_{\infty} $ and further specialising to $X$ and $Z$ being composed of rank-one projectors reduces the formula to $h_j(X,Z) = \max_k c_{jk}$, which is the formula appearing in \eqref{eq:qprime}.

While we have taken the tripartite view to give a simple statement of our results for POVMs, it is possible rewrite \eqref{eqn23779appPOVM} in a bipartite form, using an approach similar to that in \cite{TomamichelThesis2012}. We obtain:
\begin{equation}
\label{eqn23779appPOVMab}
H(X|B)+H(Z|B) \geq \qq (\rho_A) + H(A|B) - f
\end{equation}
where $f:= \min \{H(A|BX)_{\rhoh},H(A|BZ)_{\rhob}\}$, and where $H(A|BX)_{\rhoh}$ and $H(A|BZ)_{\rhob}$ denote the conditional entropies of $\rhoh_{XAB}$ and $\rhob_{ZAB}$, respectively. For the case of orthonormal bases considered earlier, $f=0$.

\subsection{Choice of complementarity factor}\label{apsct1.4}

The following technical lemma, proved in App.~\ref{app:prooflemA}, is relevant to our choice of complementarity factor for POVMs \footnote{Lemma~\ref{thmPOVM3875} was proved in a collaborative discussion with M.~Tomamichel, and approval to publish it in this paper was granted by M.~Tomamichel.}.

\begin{lemma}\label{thmPOVM3875}
Let $\sg$ be an arbitrary operator---that is, an arbitrary square matrix, although we will be interested mostly in the case in which $\sigma$ is positive semidefinite, hence the choice of notation---and let $Z = \{Z_k\}$ be any POVM.
Then
\begin{equation}
\label{eqn237893479}
 \Big\| \sum_k Z_k \sg Z_k \Big\|_{\infty} \leq \max_k \Big\| \sqrt{Z_k} \sg \sqrt{Z_k} \Big\|_{\infty}.
\end{equation}
\end{lemma}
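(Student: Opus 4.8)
The plan is to prove the operator-norm inequality $\|\sum_k Z_k \sg Z_k\|_\infty \le \max_k \|\sqrt{Z_k}\,\sg\,\sqrt{Z_k}\|_\infty$ by testing against an arbitrary pair of unit vectors. Fix unit vectors $\ket{u},\ket{v}$ and consider $\matl{u}{\sum_k Z_k \sg Z_k}{v} = \sum_k \matl{u}{Z_k \sg Z_k}{v}$. The key move is to write each summand as $\matl{u}{\sqrt{Z_k}\,(\sqrt{Z_k}\,\sg\,\sqrt{Z_k})\,\sqrt{Z_k}}{v} = \matl{u_k}{M_k}{v_k}$ where $\ket{u_k} := \sqrt{Z_k}\ket{u}$, $\ket{v_k}:=\sqrt{Z_k}\ket{v}$, and $M_k := \sqrt{Z_k}\,\sg\,\sqrt{Z_k}$. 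Then $|\matl{u_k}{M_k}{v_k}| \le \|M_k\|_\infty \,\|\ket{u_k}\|\,\|\ket{v_k}\| \le \big(\max_k \|M_k\|_\infty\big)\,\|\ket{u_k}\|\,\|\ket{v_k}\|$.

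Next I would sum over $k$ and apply Cauchy--Schwarz to the sequences $\{\|\ket{u_k}\|\}_k$ and $\{\|\ket{v_k}\|\}_k$: $\sum_k \|\ket{u_k}\|\,\|\ket{v_k}\| \le \big(\sum_k \|\ket{u_k}\|^2\big)^{1/2}\big(\sum_k \|\ket{v_k}\|^2\big)^{1/2}$. Now $\sum_k \|\ket{u_k}\|^2 = \sum_k \matl{u}{Z_k}{u} = \matl{u}{(\sum_k Z_k)}{u} = \ip{u}{u} = 1$ by the POVM completeness relation $\sum_k Z_k = \id$, and likewise $\sum_k \|\ket{v_k}\|^2 = 1$. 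Hence $|\matl{u}{\sum_k Z_k \sg Z_k}{v}| \le \max_k \|M_k\|_\infty$. Taking the supremum over unit vectors $\ket{u},\ket{v}$ gives the operator norm on the left (valid for arbitrary, not necessarily normal, $\sigma$, since $\|M\|_\infty = \sup_{\|u\|=\|v\|=1}|\matl{u}{M}{v}|$), completing the argument.

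This argument is short and essentially self-contained; there is no serious obstacle. The one point needing a little care is the justification that $\|M\|_\infty = \sup_{\|u\|=\|v\|=1}|\matl{u}{M}{v}|$ holds for a general square matrix $M$ (not just Hermitian/positive ones), which is the standard variational characterization of the largest singular value and is exactly why the lemma is stated for arbitrary $\sigma$. A minor subtlety is that $\sqrt{Z_k}$ denotes the positive semidefinite square root, so that $Z_k = \sqrt{Z_k}\sqrt{Z_k} = \sqrt{Z_k}^\dagger \sqrt{Z_k}$ and $\|\sqrt{Z_k}\ket{u}\|^2 = \matl{u}{Z_k}{u}$; this is immediate. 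I would also remark that the same proof shows the bound is often not tight, but that is not needed for the statement.
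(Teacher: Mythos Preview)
Your proof is correct and complete. It differs from the paper's argument in a meaningful way, so a short comparison is warranted.

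The paper proceeds structurally: it embeds the problem in an enlarged space by forming the block-diagonal operator $\rho=\sum_k\dya{k}\ot\sqrt{Z_k}\,\sg\,\sqrt{Z_k}$ (so that $\|\rho\|_\infty=\max_k\|\sqrt{Z_k}\,\sg\,\sqrt{Z_k}\|_\infty$), introduces the Naimark-type isometry $V=\sum_k\ket{k}\ot\sqrt{Z_k}$ to write $\sum_k Z_k\sg Z_k=V^\dagger\rho V$, and then invokes the pinching inequality for the operator norm (cited from Bhatia) together with isometric invariance to conclude $\|V^\dagger\rho V\|_\infty\le\|\rho\|_\infty$.

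Your argument instead stays on the original space and is fully elementary: the variational formula $\|M\|_\infty=\sup_{\|u\|=\|v\|=1}|\matl{u}{M}{v}|$, the bound $|\matl{u_k}{M_k}{v_k}|\le\|M_k\|_\infty\|u_k\|\|v_k\|$, Cauchy--Schwarz on the sum over $k$, and the completeness relation $\sum_k Z_k=\id$. This avoids the auxiliary Hilbert space and the external pinching result entirely, and the same chain of inequalities works verbatim for non-Hermitian $\sg$ without any extra case analysis. The paper's route, on the other hand, makes the dilation structure explicit and connects the lemma to the general fact that compression by an isometry (equivalently, pinching) is an operator-norm contraction---conceptually useful if one wants to reuse that mechanism elsewhere. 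Both proofs are short; yours is the more self-contained of the two.
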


Our choice of complementarity factor was inspired by Refs.~\cite{TomamichelThesis2012,TomHang2013}. In particular, in Chapter 7 of \cite{TomamichelThesis2012} Tomamichel conjectures that, for any two POVMs $X$ and $Z$,
\begin{equation}
\label{eqnTomconj1}
\max_j  \Big\| \sum_k Z_k X_j Z_k \Big\|_{\infty} \leq \max_{j,k} c_{jk},
\end{equation}
where
\begin{equation}
\label{eqnGencjk1}
c_{jk} = \Big\|  \sqrt{Z_k} X_j \sqrt{Z_k} \Big\|_{\infty}= \Big\|  \sqrt{X_j} \sqrt{Z_k} \Big\|_{\infty}^2.
\end{equation}
Clearly our Lemma~\ref{thmPOVM3875} implies Eq.~\eqref{eqnTomconj1} and hence resolves an outstanding conjecture. The reason this conjecture was interesting was because the factors on the left- and right-hand-sides of \eqref{eqnTomconj1} were alternative complementarity factors that could potentially be used as bounds in the uncertainty relation. Indeed the r.h.s.\ of \eqref{eqnTomconj1} was used in several uncertainty relations \cite{ColesEtAl,TomRen2010,ColesColbeckYuZwolak2012PRL}, so proving that the l.h.s.\ of \eqref{eqnTomconj1} is smaller, as we have done here, shows that the l.h.s.\ provides a better bound for POVM uncertainty relations. (This issue is only of concern for general POVMs, since the two factors in \eqref{eqnTomconj1} are equal when $X$ and $Z$ are orthonormal bases.)

This discussion has relevance to the present article since our derived bound in Theorem~\ref{thmUncRelApp} involves quantities $\qq (\rho_A,X,Z)$ and $\qq (\rho_A,Z,X)$ defined in terms of $h_j(X,Z)$ and $h_k(Z,X)$ given in \eqref{eqn79}. But from Lemma~\ref{thmPOVM3875}, these quantities are bounded by
\begin{subequations}
\label{eqnhj5}
\begin{align}
h_j(X,Z) \leq \max_k \big\| \sqrt{Z_k} X_j \sqrt{Z_k} \big\|_{\infty} = \max_k c_{jk},\\
h_k(Z,X) \leq \max_j \big\| \sqrt{X_j} Z_k \sqrt{X_j} \big\|_{\infty} = \max_j c_{jk}.
\end{align} 
\end{subequations}
Hence our bound involving $h_j(X,Z)$ and $h_k(Z,X)$ is stronger than the one obtained from replacing them with the quantities on the right-hand-sides of \eqref{eqnhj5}. This provides justification for our choice of complementarity factor.

\subsection{Information exclusion relation for POVMs}\label{apsct1.5}

Here we use Theorem~\ref{thmUncRelApp} to derive an information exclusion relation that is generalised to the POVM case. Again, we note that the following definition of $\rr $ reduces to that in \eqref{eqn237431} when $X$ and $Z$ are specialised to be orthonormal bases.

\begin{corollary}
\label{thm373}
Let $X = \{X_j\}$ and $Z = \{Z_k\}$ be arbitrary POVMs on $A$. Then for any tripartite state $\rho_{ABC}$,
\begin{equation}
\label{eqn2377439appPOVM}
I (X\colo B)+I(Z\colo C) \leq \rr ,
\end{equation}
where we define
\begin{subequations}
\label{eqnrppdefapp}
\begin{align}
\rr  & = \min\{\rr ( X,Z), \rr (Z,X)\},\\
\rr ( X,Z) &= \log_2 [|Z| \sum_j h_j(X,Z) ],\\
\rr (Z,X) &= \log_2 [|X| \sum_k h_k(Z,X) ].
\end{align}
where $|Z|$ and $|X|$ denote the number of POVM elements.
\end{subequations}
\end{corollary}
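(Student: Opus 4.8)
The plan is to mimic the proof of Corollary~\ref{thmMain231}, simply replacing the orthonormal-basis quantities by their POVM analogues and invoking Theorem~\ref{thmUncRelApp} in place of Theorem~\ref{thmMain230}. First I would write $H(X|B) = H(X) - I(X\colo B)$ and $H(Z|C) = H(Z) - I(Z\colo C)$, where $H(X) = H(\rhoh_X)$ and $H(Z) = H(\rhob_Z)$ are the Shannon entropies of the distributions $\{p^x_j\}$ and $\{p^z_k\}$ defined via $p^x_j = \Tr(X_j\rho_A)$, $p^z_k = \Tr(Z_k\rho_A)$. Substituting these into the tripartite uncertainty relation \eqref{eqn23779appPOVM} and rearranging gives
\[
I(X\colo B) + I(Z\colo C) \leq H(X) + H(Z) - \qq(\rho_A).
\]
Since $\qq(\rho_A) = \max\{\qq(\rho_A,X,Z),\qq(\rho_A,Z,X)\} \geq \qq(\rho_A,X,Z)$, I can drop to the weaker bound with $\qq(\rho_A,X,Z)$; the symmetric choice will later produce the $r(Z,X)$ branch.

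Next I would bound $H(Z)$ by $\log_2|Z|$ (the classical entropy of a distribution on $|Z|$ outcomes is at most $\log_2|Z|$), leaving $H(X) + \log_2|Z| - \qq(\rho_A,X,Z)$. The remaining piece to control is
\[
H(X) - \qq(\rho_A,X,Z) = \sum_j p^x_j \log_2\!\big(h_j(X,Z)/p^x_j\big),
\]
using $\qq(\rho_A,X,Z) = -\sum_j p^x_j \log_2 h_j(X,Z)$ from \eqref{eqnqprimerhoaApp}. Jensen's inequality (concavity of $\log_2$), applied with weights $p^x_j$ summing to one, gives $\sum_j p^x_j \log_2(h_j(X,Z)/p^x_j) \leq \log_2\big(\sum_j h_j(X,Z)\big)$. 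Combining, $I(X\colo B)+I(Z\colo C) \leq \log_2|Z| + \log_2\big(\sum_j h_j(X,Z)\big) = \log_2\big(|Z|\sum_j h_j(X,Z)\big) = \rr(X,Z)$. Running the argument with the roles of $X$ and $Z$ interchanged — bounding $H(X)\leq\log_2|X|$ and using $\qq(\rho_A,Z,X)$ — yields $I(X\colo B)+I(Z\colo C) \leq \rr(Z,X)$, and taking the better of the two gives the $\min$ in \eqref{eqnrppdefapp}. One should also note (as the text already flags) that when $X,Z$ are rank-one projective POVMs, $h_j(X,Z) = \max_k c_{jk}$, so $\rr$ reduces to the quantity in \eqref{eqn237431}, recovering Corollary~\ref{thmMain231}.

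I do not anticipate a serious obstacle here: this corollary is a purely algebraic consequence of Theorem~\ref{thmUncRelApp}, and the only inputs beyond rearrangement are the trivial bound $H(\,\cdot\,)\leq\log_2(\text{number of outcomes})$ and one application of Jensen's inequality. The only point requiring a little care is that the classical entropy bound must be applied to the distribution that is \emph{not} being kept — i.e. bound $H(Z)$ when using $\qq(\rho_A,X,Z)$, and $H(X)$ when using $\qq(\rho_A,Z,X)$ — so that the surviving entropy term is exactly the one that combines with the $\qq$-term inside the Jensen step; mixing these up would give a weaker (or ill-matched) bound. A secondary subtlety, worth a remark, is that all the relevant quantities are finite: the POVMs have finitely many elements, $H(X)$ and $H(Z)$ are finite classical entropies, and $h_j(X,Z)>0$ whenever $p^x_j>0$ (if $p^x_j=0$ the corresponding term drops out of both sums by the usual $0\log 0 = 0$ convention), so no indeterminate forms arise in the manipulation of $\qq(\rho_A,X,Z)$.
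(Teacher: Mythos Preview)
Your proposal is correct and follows essentially the same argument as the paper's proof: rewrite the conditional entropies as $H(\cdot)-I(\cdot\colo\cdot)$, substitute into Theorem~\ref{thmUncRelApp}, bound $H(Z)\leq\log_2|Z|$, and apply the concavity of the logarithm to $H(X)-\qq(\rho_A,X,Z)$, then invoke the $X\leftrightarrow Z$ symmetry. The only difference is cosmetic---you explicitly separate out the step $\qq(\rho_A)\geq\qq(\rho_A,X,Z)$ and add remarks on finiteness and the rank-one reduction, all of which are harmless elaborations.
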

\begin{proof}
Write $H(X|B) = H(X)- I(X\colo B)$ and $H(Z|C)= H(Z) - I(Z\colo C)$, then rearrange \eqref{eqn23779appPOVM} and use $H(Z)\leq \log_2 |Z|$ to get
$$I(X\colo B)+I(Z\colo C) \leq \log_2 |Z| + H(X) - \qq (\rho_A, X,Z).$$
Now write 
\begin{align}
H(X) - \qq (\rho_A, X,Z) &= \sum_j p^x_j \log_2 [h_j(X,Z)/p^x_j ] \notag \\
&\leq \log_2 [ \sum_j h_j(X,Z)],
\end{align}
where we used the concavity of the log. Bringing $|Z|$ inside the log completes the proof, and by symmetry the same bound holds where one interchanges $X$ and $Z$.\end{proof}

Finally, we generalise \eqref{eqn2958383app} to the POVM case. The following result is applicable to the same scenario that Hall considered in his information exclusion principle, except we have generalised it to the case where $X$ and $Z$ are POVMs.
\begin{corollary}
Let $X = \{X_j\}$ and $Z = \{Z_k\}$ be arbitrary POVMs on $A$. Let $Y$ be a classical register that may be correlated to $A$, i.e., $\rho_{AY}$ is an arbitrary quantum-classical state. Then,
\begin{equation}
\label{eqn2377439appPOVMhall}
I (X\colo Y)+I(Z\colo Y) \leq \rr 
\end{equation}
where $\rr $ is defined by Eq.~\eqref{eqnrppdefapp}.
\end{corollary}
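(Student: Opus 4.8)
The plan is to derive \eqref{eqn2377439appPOVMhall} directly from Corollary~\ref{thm373} by viewing the classical register $Y$ as a special case of a quantum system, duplicating it to create the tripartite state required by \eqref{eqn2377439appPOVM}. Concretely, since $\rho_{AY}$ is quantum-classical, write $\rho_{AY} = \sum_y p_y\, \rho_A^{(y)} \ot \dya{y}_Y$ with $\{\ket{y}\}$ an orthonormal basis of the classical register. First I would introduce a second copy $C$ of the register and define the tripartite state $\rho_{ABC}$ with $B := Y$ and $C$ a coherent copy, i.e. $\rho_{ABC} = \sum_y p_y\, \rho_A^{(y)} \ot \dya{y}_B \ot \dya{y}_C$, so that $\rho_{AB} = \rho_{AC} = \rho_{AY}$ as states on the relevant marginals. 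With this choice, $I(X\colo B)$ computed from $\rho_{ABC}$ equals $I(X\colo Y)$ computed from $\rho_{AY}$, and likewise $I(Z\colo C) = I(Z\colo Y)$, because the mutual informations in question only depend on the marginal classical-quantum states $\rhoh_{XB}$ and $\rhob_{ZC}$, which are unchanged under this duplication.

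The key steps, in order, are: (i) given $\rho_{AY}$, construct $\rho_{ABC}$ by the coherent-copy (``fan-out'') construction above, noting that this is a legitimate quantum state and that $\rho_A$ — hence $\qq(\rho_A)$ and $\rr$ — is untouched; (ii) verify that the post-measurement marginals satisfy $\rhoh_{XB}|_{\rho_{ABC}} = \rhoh_{XY}|_{\rho_{AY}}$ and $\rhob_{ZC}|_{\rho_{ABC}} = \rhob_{ZY}|_{\rho_{AY}}$, which gives $I(X\colo B) = I(X\colo Y)$ and $I(Z\colo C) = I(Z\colo Y)$ with the conventions of \eqref{eqnCE11} and the mutual information $I(X\colo B) = H(X) + H(B) - H(XB)$; (iii) apply Corollary~\ref{thm373} to $\rho_{ABC}$ to get $I(X\colo B) + I(Z\colo C) \leq \rr$; (iv) substitute the equalities from step (ii) to conclude \eqref{eqn2377439appPOVMhall}. (Alternatively, one can skip the duplication and invoke the bipartite form of the information exclusion relation together with the fact that $H(A|Y) \geq 0$ for a quantum-classical state, exactly as in the proof of Corollary~\ref{thmMain232}; I would mention this as the shorter route.)

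The only point requiring a little care — and the natural candidate for the ``main obstacle,'' though it is a mild one — is step (ii): checking that the tripartite-to-bipartite reduction of the correlations is exact, i.e. that no information is lost or created by the coherent copy. This amounts to observing that the classical nature of $Y$ makes the copy map $\ket{y}_B \mapsto \ket{y}_B \ket{y}_C$ an isometry on the support of $\rho_B$, under which $B$-marginals and $C$-marginals of any post-measurement state both coincide with the corresponding marginal of the original $\rho_{AY}$; since entropy is invariant under isometries and the measurement channels $\XC$, $\ZC$ act only on $A$, all four entropies entering $I(X\colo B)$ and $I(Z\colo C)$ are preserved. Once this is noted the result is immediate, so the corollary really is a one-line consequence of Corollary~\ref{thm373} (equivalently, of the tripartite relation \eqref{eqn23785app}), just as \eqref{eqn2958383app} followed from \eqref{eqn23785app} in the orthonormal-basis case.
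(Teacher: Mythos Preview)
Your proposal is correct and follows essentially the same route as the paper: duplicate the classical register $Y$ into a copy $Y'$ (your $C$), note that classicality of $Y$ makes $\rho_{AY'}$ identical in form to $\rho_{AY}$ so that $I(X\colo Y')=I(X\colo Y)$ and $I(Z\colo Y)=I(Z\colo Y')$, and then apply Corollary~\ref{thm373} to the tripartite state $\rho_{AYY'}$. The paper's proof is exactly your steps (i)--(iv) compressed to two lines; one minor caveat is that your suggested alternative bipartite shortcut via $H(A|Y)\geq 0$ is not immediately available here, since the paper states the bipartite information exclusion relation \eqref{eqn23785} only for orthonormal bases, whereas for general POVMs the clean form is the tripartite one.
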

\begin{proof}
Apply \eqref{eqn2377439appPOVM} to the tripartite state $\rho_{AYY'}$ where system $Y'$ is an exact copy of system $Y$, such that $\rho_{AY'}=\Tr_Y(\rho_{AYY'})$ is of the same form as $\rho_{AY}= \Tr_{Y'}(\rho_{AYY'}) $. (Note: the fact that $Y$ is classical allows us to copy its correlations with $A$.) In this case we have $I (X\colo Y') = I (X\colo Y)$, hence proving \eqref{eqn2377439appPOVMhall}.
\end{proof}

\section{State-independent bound for uncertainty relation}\label{apsct2}

\subsection{Computable expression}\label{apsct2.1}

Now let us consider the state-independent version of our bound, defined by
$$\qq  = \min_{\rho_A} \qq (\rho_A).$$
In Sec.~\ref{sct2} we noted that this bound can be rewritten in an alternative form that may be easier to calculate. Here we derive this alternative form.

Let us first rewrite $\qq (\rho_A)$ as follows:
\begin{align}
\label{eqnqprime121}
\qq (\rho_A) &= \max \{\qq (\rho_A,X,Z),\qq (\rho_A,Z,X)\}\notag\\
 &= \max_{0\leq p\leq 1} \large[p \hspace{2pt}\qq (\rho_A,X,Z) +(1-p)\qq (\rho_A,Z,X)\large]\notag\\
 &= \max_{0\leq p\leq 1} \large[ p\hspace{1pt} \Tr [ \rho_A \cdot \sum_j X_j \log_2 (1/h_j(X,Z))]\notag\\
  &\hspace{10 pt}+(1-p) \Tr [\rho_A \cdot \sum_k Z_k \log_2 (1/h_k(Z,X))] \large]\notag\\
 &= \max_{0\leq p\leq 1} \Tr[ \rho_A \Dl(p)],
\end{align} 
where we define
\begin{align}
\label{eqn237qprime2app}
\Dl(p) &= p\Dl_{XZ}+(1-p)\Dl_{ZX},\\
\Dl_{XZ}&=\sum_j   \log_2 (1/h_j(X,Z))\cdot X_j,\notag\\
\Dl_{ZX}&=\sum_k   \log_2 (1/h_k(Z,X)) \cdot Z_k.\notag
\end{align}
From $h_j(X,Z)\leq 1$ and $h_k(Z,X) \leq 1$, it follows that $\Dl_{XZ}\geq 0$ and $\Dl_{ZX}\geq 0$, and hence $\Dl(p)\geq 0$.

Next, thanks to the linearity in the arguments, we can use the minimax theorem to interchange the min and max in $\qq $ as follows:
\begin{align}
\label{eqnqprime122}
\qq  &= \min_{\rho_A} \max_{0\leq p\leq 1} \Tr[ \rho_A \Dl(p)]\notag\\
 &=  \max_{0\leq p\leq 1} \min_{\rho_A} \Tr[ \rho_A \Dl(p)]\notag\\
 &=  \max_{0\leq p\leq 1} \lm_{\min} [\Dl(p)].
\end{align} 

\begin{figure}
\begin{center}
\includegraphics[width=3.2in]{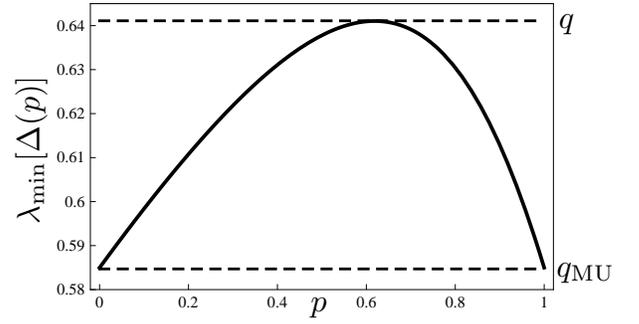}
\caption{Plot of the minimum eigenvalue of $\Dl(p)$ as a function of $p$, for Example~\ref{Ex1}. The maximum in the plot corresponds to $\qq  \approx 0.64$, which is an improvement over the old bound $\qmu \approx 0.58$ corresponding to the value at $p=0$ and $p=1$.}
\label{fgr1}
\end{center}
\end{figure}

The formula in \eqref{eqnqprime122} makes it possible to numerically calculate $\qq $. Given the POVM elements of $X$ and $Z$, it is straightforward to numerically diagonalise $\Dl(p)$ for a fixed $p$; then the maximisation over $p$ can be plotted graphically. For example, Fig.~\ref{fgr1} shows this plot for Example~\ref{Ex1} given in Sec.~\ref{sct2}, yielding a value of $\qq  \approx 0.64$.

It is also worth noticing that, since $\lm_{\min}[\Dl_{XZ}]=\lm_{\min}[\Dl_{ZX}]=\qmu $, Eq.~\eqref{eqnqprime122} is another way of seeing that $\qq  \geq \qmu $. Also, since the smallest eigenvalue satisfies $\lm_{\min}[A+B] \geq \lm_{\min}[A] +\lm_{\min}[B]$ for any two Hermitian matrices~\cite{bhatia1997matrix}, we have that $\qq  = \qmu $ iff the function $\lm_{\min}[\Dl(p) ]$ is independent of $p$ and hence is equal to $\qmu $ for all $p$.

\subsection{Analytical bound}\label{apsct2.2}

While $q$ is our strongest state-independent bound, we can find a slightly weaker state-independent bound $q'$ that is given by a simple, analytical expression and is still an improvement over $\qmu$. In Cor.~\ref{thmMain230a}, we gave the form of $q'$ in terms of the largest and second-largest entries of the matrix $[c_{jk}]$. To state this result for general POVMs, we define $c_{jk}$ according to \eqref{eqnGencjk1}, which reduces to the expression in \eqref{eqn23777b} in the case of orthonormal bases.

Now we generalise Cor.~\ref{thmMain230a} to POVMs as follows, with the proof in App.~\ref{app:proofcor2}.
\begin{corollary}
\label{thmAppQprimebound}
Let $(\hat{j},\hat{k})$ be a pair of indices such that $c_{\hat{j}\hat{k}}=\max_{jk}c_{jk}= \cmax$, where $c_{jk}$ is defined in \eqref{eqnGencjk1}, so that $c_{\hat{j}\hat{k}}= \| \sqrt{X_{\hat{j}}}\sqrt{Z_{\hat{k}}}  \|_\infty^2$. Let $c_2$ be the second-largest entry of the matrix $[c_{jk}]$ (possibly equal to $\cmax $). It holds that $q(\rho_A) \geq q'$ where $q'$ is a state-independent parameter given by
\begin{equation}
\label{eq:lowerboundapp}
\qq ' = \qmu + \frac{1}{2}(1-\sqrt{\cmax })\log_2 \bigg(\frac{ \cmax }{c_2} \bigg).
\end{equation}
\end{corollary}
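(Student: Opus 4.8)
The plan is to bound the state-dependent quantity $\qq(\rho_A)$ from below by a quantity that no longer depends on $\rho_A$, exploiting the freedom we have in picking the optimal direction ($X$-first versus $Z$-first) in the definition $\qq(\rho_A)=\max\{\qq(\rho_A,X,Z),\qq(\rho_A,Z,X)\}$. By \eqref{eqnhj5} we have $h_j(X,Z)\le\max_k c_{jk}\le\cmax$ for every $j$, and likewise $h_k(Z,X)\le\cmax$ for every $k$, so each of the two ``directional'' bounds is at least $\qmu=\log_2(1/\cmax)$ on its own. The improvement over $\qmu$ must come from the one index — $\hat\jmath$ (in the $X$-first expansion) or $\hat k$ (in the $Z$-first expansion) — at which the relevant coefficient can be strictly below $\cmax$, namely bounded by $c_2$. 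The subtlety is that $\rho_A$ controls how much weight $p^x_{\hat\jmath}$ (resp.\ $p^z_{\hat k}$) is placed on that favourable index, and an adversarial $\rho_A$ will try to put as little weight there as possible.

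Concretely, I would argue as follows. First establish the two one-sided estimates
\[
\qq(\rho_A,X,Z)\ \ge\ \qmu + p^x_{\hat\jmath}\,\log_2\!\frac{\cmax}{h_{\hat\jmath}(X,Z)},\qquad
\qq(\rho_A,Z,X)\ \ge\ \qmu + p^z_{\hat k}\,\log_2\!\frac{\cmax}{h_{\hat k}(Z,X)},
\]
obtained by splitting the sum in \eqref{eqnqprimerhoaApp} into the distinguished term and the rest, using $h_j\le\cmax$ on the rest (so those terms contribute $\ge p^x_j\qmu$) and then $\sum_j p^x_j=1$. The key sub-step is to show $h_{\hat\jmath}(X,Z)\le c_2$ and $h_{\hat k}(Z,X)\le c_2$: indeed $h_{\hat\jmath}(X,Z)\le\max_k c_{\hat\jmath k}$, and since $c_{\hat\jmath\hat k}=\cmax$ is (by the hypothesis that $(\hat\jmath,\hat k)$ is \emph{a} maximising pair) already ``used up'', the remaining entries $c_{\hat\jmath k}$ with $k\ne\hat k$ are $\le c_2$; but one must be slightly careful — if $\max_k c_{\hat\jmath k}$ is attained at $k=\hat k$ this only gives $h_{\hat\jmath}\le\cmax$, not $\le c_2$. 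I expect this is exactly the reason the final bound carries the factor $(1-\sqrt{\cmax})$ rather than something larger, and that the resolution is to instead use the sharper eigenvalue-based estimate: $h_{\hat\jmath}(X,Z)=\|\sum_k Z_k X_{\hat\jmath} Z_k\|_\infty$, and on the support structure forced by $c_{\hat\jmath\hat k}=\|\sqrt{X_{\hat\jmath}}\sqrt{Z_{\hat k}}\|_\infty^2=\cmax$ one extracts a genuine gap. This is the step I expect to be the main obstacle, and where the $\sqrt{\cmax}$ must enter.

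Once the two directional bounds are in hand, take the max over the two directions and then minimise over $\rho_A$. The adversary faces a trade-off: making $p^x_{\hat\jmath}$ small forces — since the state is normalised and the two distinguished indices $\hat\jmath$, $\hat k$ interact through $c_{\hat\jmath\hat k}=\cmax$ being large, i.e.\ $\sqrt{X_{\hat\jmath}}$ and $\sqrt{Z_{\hat k}}$ having a large overlap — the complementary probability $p^z_{\hat k}$ to be correspondingly large, quantitatively by at least $\sqrt{\cmax}$ in a suitable sense (this is the standard ``$|\langle x_{\hat\jmath}|z_{\hat k}\rangle|=\sqrt{\cmax}$ forces $p^z_{\hat k}\ge$ something when $p^x_{\hat\jmath}$ is small'' argument). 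Taking the max of the two lower bounds and optimising the worst case over the pair $(p^x_{\hat\jmath},p^z_{\hat k})$ subject to that coupling constraint yields the balanced value $\tfrac12(1-\sqrt{\cmax})$ as the guaranteed weight multiplying $\log_2(\cmax/c_2)$, giving $\qq(\rho_A)\ge\qmu+\tfrac12(1-\sqrt{\cmax})\log_2(\cmax/c_2)=\qq'$ for all $\rho_A$. Finally I would note the degenerate cases: if $c_2=\cmax$ the bound collapses to $\qq'=\qmu$, consistent with the qubit case where necessarily $c_2=\cmax$; and if $\cmax=1$ the prefactor vanishes, again giving $\qq'=\qmu=0$, as it must.
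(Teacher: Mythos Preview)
Your decomposition is inverted, and this is a genuine gap rather than a technicality. You isolate the index $\hat\jmath$ and hope that $h_{\hat\jmath}(X,Z)\le c_2$, but the opposite is true: $\hat\jmath$ is precisely the row where $\max_k c_{\hat\jmath k}=c_{\hat\jmath\hat k}=\cmax$, so (already for orthonormal bases, where $h_j=\max_k c_{jk}$ exactly) one has $h_{\hat\jmath}=\cmax$ and your ``improvement'' term $p^x_{\hat\jmath}\log_2(\cmax/h_{\hat\jmath})$ vanishes identically. The improvement lives on the \emph{other} indices: for every $j\neq\hat\jmath$ one has $\max_k c_{jk}\le c_2$ (if $c_2<\cmax$ then $(\hat\jmath,\hat k)$ is the unique maximiser, so every entry outside that row is $\le c_2$; if $c_2=\cmax$ the bound is trivial). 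Bounding the $\hat\jmath$ term by $\cmax$ and the rest by $c_2$ gives
\[
\qq(\rho_A,X,Z)\ \ge\ \qmu+(1-p^x_{\hat\jmath})\log_2\frac{\cmax}{c_2},
\qquad
\qq(\rho_A,Z,X)\ \ge\ \qmu+(1-p^z_{\hat k})\log_2\frac{\cmax}{c_2}.
\]

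With the correct sign, the adversarial $\rho_A$ now tries to make $p^x_{\hat\jmath}$ and $p^z_{\hat k}$ \emph{large}, not small---and this is where your trade-off heuristic also fails: there is no obstruction to making both $p^x_{\hat\jmath}$ and $p^z_{\hat k}$ simultaneously \emph{small} (in $d\ge 3$ take $\rho_A$ supported orthogonally to both $\ket{x_{\hat\jmath}}$ and $\ket{z_{\hat k}}$), so your proposed coupling constraint does not exist. What does exist is an \emph{upper} bound on their sum: $p^x_{\hat\jmath}+p^z_{\hat k}=\Tr[\rho_A(X_{\hat\jmath}+Z_{\hat k})]\le\|X_{\hat\jmath}+Z_{\hat k}\|_\infty\le 1+\sqrt{\cmax}$, the last step by Lemma~\ref{thmPosOp1}. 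Averaging the two directional bounds (using $\max\ge$ average) and inserting this operator-norm estimate yields $\qq(\rho_A)\ge\qmu+\tfrac12\bigl[2-(p^x_{\hat\jmath}+p^z_{\hat k})\bigr]\log_2(\cmax/c_2)\ge\qq'$. That is where $\sqrt{\cmax}$ enters---not from any refined estimate on $h_{\hat\jmath}$.
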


Eq. \eqref{eq:lowerboundapp}, or more precisely its special case~\eqref{eqn23779af} for complete von Neumann measurements, allows us to argue that, if $d\geq 3$, our bound $\qq '$ (and hence also $q$) is an improvement over the standard bound $\qmu $ for almost all pairs of orthonormal bases $( X,Z )$. To argue this, we will need the following lemma, proved in App.~\ref{app:prooflem}, kindly provided by N. Johnston \cite{NJohnstonPrivComm}.
\begin{lemma}
\label{lem:generic}
For any dimension $d\geq 3$, the entries $U_{ij}$ of a generic $d$-dimensional unitary $U$ satisfy
\begin{equation}
\label{eq:typicalunitary}
|U_{ij}|\neq |U_{kl}|,\quad \forall (i,j)\neq (k,l).
\end{equation}
That is, for any dimension $d\geq 3$ the set of unitaries that violate \eqref{eq:typicalunitary} has vanishing measure with respect to the Haar measure.
\end{lemma}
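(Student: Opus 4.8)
The plan is to prove Lemma~\ref{lem:generic} by realising the violating set as a finite union of real-algebraic subvarieties of the unitary group $U(d)$ of positive codimension, each of which therefore has vanishing Haar measure; a finite union of null sets is null. Concretely, for each fixed pair of index-pairs $(i,j)\neq(k,l)$, consider the condition $|U_{ij}|^2 = |U_{kl}|^2$, i.e. $\overline{U_{ij}}U_{ij} - \overline{U_{kl}}U_{kl} = 0$. This is a single real polynomial equation in the (real and imaginary parts of the) entries of $U$, so it cuts out a closed real-analytic subset $S_{(i,j),(k,l)}\subset U(d)$. The full violating set is $\bigcup_{(i,j)\neq(k,l)} S_{(i,j),(k,l)}$, a finite union, so it suffices to show each $S_{(i,j),(k,l)}$ has empty interior in $U(d)$ (equivalently, is a proper subvariety): a proper real-analytic subvariety of a connected real-analytic manifold has measure zero with respect to any smooth measure, in particular the Haar measure.

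The key step is therefore to exhibit, for each $(i,j)\neq(k,l)$, at least one unitary $U\in U(d)$ with $|U_{ij}|\neq|U_{kl}|$; this shows the defining function $U\mapsto |U_{ij}|^2-|U_{kl}|^2$ is not identically zero on the connected manifold $U(d)$, hence its zero set is a proper subvariety. This is where $d\geq 3$ enters, and I expect it to be the only real content of the argument. The subtlety is the case where the two index-pairs lie in the same row or the same column (say $j=l$, $i\neq k$): then one cannot simply make one entry large and the other zero while keeping the other relevant entry free, but with $d\geq 3$ there is enough room — e.g. take $U$ built from a $2\times 2$ rotation by a generic angle acting on the span of two standard basis vectors chosen to include rows $i$ and $k$ and column $j$, padded by the identity on the orthogonal complement; a generic rotation angle gives $|U_{ij}| = |\cos\theta| \neq |\sin\theta|$ or similar, separating the two magnitudes. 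For $d=2$ this fails precisely because a $2\times 2$ unitary has $|U_{11}| = |U_{22}|$ and $|U_{12}| = |U_{21}|$ forced, which is why the hypothesis $d\geq3$ is necessary. The general case of arbitrary $(i,j)\neq(k,l)$ reduces to this by choosing an appropriate $2\times 2$ or $3\times 3$ block (at most three distinct rows and three distinct columns are involved among $i,j,k,l$, so a $3\times3$ block always suffices) and checking a generic choice within that small block separates the magnitudes.

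The main obstacle, then, is not any deep fact but the bookkeeping of the handful of configurations of $(i,j)$ versus $(k,l)$ — same row, same column, disjoint rows and columns — and verifying in each that a suitable small unitary block produces distinct magnitudes; once that is done, the measure-theoretic conclusion (finite union of proper real-analytic subvarieties is Haar-null) is standard. I would state the real-analytic-subvariety fact as a lemma or cite it, handle the block construction in the at-most-$3\times3$ corner, and conclude.

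\begin{proof}[Proof sketch of Lemma~\ref{lem:generic}]
For a fixed ordered pair of distinct index pairs $(i,j)\neq(k,l)$, the function $g_{(i,j),(k,l)}(U) = |U_{ij}|^2 - |U_{kl}|^2$ is a real polynomial in the real and imaginary parts of the entries of $U$, so its zero set $S_{(i,j),(k,l)} = \{U\in U(d): g_{(i,j),(k,l)}(U)=0\}$ is a real-analytic subvariety of the connected real-analytic manifold $U(d)$. A real-analytic subvariety is either all of $U(d)$ or a proper subvariety of strictly lower dimension, and in the latter case it has vanishing Haar measure. Hence it suffices to produce, for each such pair, a single $U\in U(d)$ with $g_{(i,j),(k,l)}(U)\neq 0$.

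Among the indices $i,k$ (rows) and $j,l$ (columns) at most three distinct rows and three distinct columns appear, so after permuting rows and columns (which preserves $U(d)$) we may assume all relevant indices lie in the top-left $m\times m$ corner with $m\leq 3\leq d$. Take $U = W\oplus \id_{d-m}$ with $W$ a suitable $m\times m$ unitary. When $(i,j)$ and $(k,l)$ share neither a row nor a column, choose $W$ so that $|W_{ij}|=1$ and $|W_{kl}|=0$ (possible since the two entries lie in disjoint rows and columns). When they share a row or a column, use $m\geq 2$ and pick $W$ to be, within the $2\times 2$ (or $3\times3$) block containing the relevant entries, a rotation by a generic angle $\theta$; then the two magnitudes are $|\cos\theta|$ and $|\sin\theta|$ (up to reindexing), which are unequal for generic $\theta$. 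This is where $d\geq 3$ is used: for $d=2$ the constraints $|U_{11}|=|U_{22}|$, $|U_{12}|=|U_{21}|$ hold identically and no such $W$ exists.

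Therefore each $S_{(i,j),(k,l)}$ is a proper real-analytic subvariety of $U(d)$, hence Haar-null. The set of unitaries violating \eqref{eq:typicalunitary} is the finite union $\bigcup_{(i,j)\neq(k,l)} S_{(i,j),(k,l)}$, a finite union of null sets, hence itself Haar-null. Equivalently, a generic $U\in U(d)$ satisfies $|U_{ij}|\neq|U_{kl}|$ for all $(i,j)\neq(k,l)$.
\end{proof}
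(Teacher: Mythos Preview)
Your proposal is correct and follows essentially the same route as the paper's proof: both realise the violating set as a finite union of zero loci of real polynomials on the connected (equivalently, irreducible) unitary group, reduce to exhibiting one witness unitary per index-pair, and conclude via the finite-union-of-null-sets argument. One small slip in your case analysis: it is the \emph{disjoint} row/column case (e.g.\ $(1,1)$ vs.\ $(2,2)$) that genuinely requires $d\geq 3$---since every $2\times2$ unitary satisfies $|W_{11}|=|W_{22}|$ and $|W_{12}|=|W_{21}|$, your parenthetical ``possible since the two entries lie in disjoint rows and columns'' tacitly uses a third row/column---whereas the same-row or same-column case is already handled by a $2\times2$ rotation even when $d=2$.
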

Combining Corollary~\ref{thmAppQprimebound} with Lemma~\ref{lem:generic} immediately leads to the following conclusion.
\begin{corollary}
In any dimension $d\geq 3$, for almost all choices of two orthonormal bases one has $c_2< \cmax  <1 $, hence $\qq ' > \qmu $.
\end{corollary}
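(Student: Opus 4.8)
The plan is to deduce the statement directly from the analytical bound of Corollary~\ref{thmAppQprimebound} together with the genericity result of Lemma~\ref{lem:generic}. The entire content is to verify that Haar-genericity of the unitary relating the two bases forces simultaneously $c_2<\cmax$ and $\cmax<1$.

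First I would fix the parametrisation. Two orthonormal bases $X=\{\ket{x_j}\}$ and $Z=\{\ket{z_k}\}$ on $\HC_A$ are connected by a unitary $U$ through $\ket{x_j}=U\ket{z_j}$, so that $c_{jk}=|\ip{x_j}{z_k}|^2=|U_{kj}|^2$. Since every $c_{jk}$ depends only on the moduli of the matrix entries of $U$ expressed in the $Z$-basis, the ``choice of two orthonormal bases'' is, up to local phases and relabelling of the basis vectors (operations that leave each $c_{jk}$ unchanged), encoded by $U$; we take $U$ distributed according to the Haar measure, which is the natural meaning of ``almost all'' here.

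Next I would invoke Lemma~\ref{lem:generic}: for $d\geq3$ the set of unitaries with $|U_{ij}|\neq|U_{kl}|$ for all $(i,j)\neq(k,l)$ has full Haar measure. For any such $U$ two facts hold. (i)~The maximum $\cmax=\max_{jk}c_{jk}$ is attained at a single pair of indices, hence the second-largest entry of the matrix $[c_{jk}]$ satisfies $c_2<\cmax$. (ii)~No entry of $U$ has modulus $1$: if $|U_{ij}|=1$, unitarity of $U$ forces the remaining $d-1\geq2$ entries of row $i$ (and of column $j$) to vanish, producing at least two entries of equal modulus $0$ and contradicting the distinctness of the moduli. Therefore $\cmax<1$.

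Finally, substituting $c_2<\cmax<1$ into \eqref{eq:lowerboundapp}, the prefactor $1-\sqrt{\cmax}$ is strictly positive and $\log_2(\cmax/c_2)>0$, so $q'>\qmu$ on a set of full Haar measure, which is the assertion. I do not expect any genuine obstacle here: everything rests on Lemma~\ref{lem:generic} (assumed) and on the elementary argument in step~(ii). The only point deserving some care is the reduction performed in the parametrisation step — checking that the phase and permutation freedom genuinely leaves all the $c_{jk}$ invariant, so that ``almost all pairs of bases'' is faithfully translated into ``Haar-almost-every $U$'' and the exceptional set remains null.
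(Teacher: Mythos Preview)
Your proposal is correct and follows essentially the same approach as the paper: both arguments invoke Lemma~\ref{lem:generic} to conclude that generically all moduli $|U_{ij}|$ are distinct, derive $\cmax<1$ from the observation that $\cmax=1$ would force an entire row or column to vanish (yielding repeated moduli), note that $c_2=\cmax$ is likewise excluded, and finish by substituting into~\eqref{eq:lowerboundapp}. Your additional remarks on the phase/permutation freedom in the parametrisation are a bit more careful than the paper's terse treatment but do not constitute a different route.
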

\begin{proof}
The bases are related by a unitary transformation $U$, represented in the first basis by entries $U_{ij}$. The parameter $\cmax $ corresponds to the modulus square of the largest entry. Because of Lemma~\ref{lem:generic}, we have $\cmax <1$ generically. Indeed, if $\cmax =1$, all the remaining entries in the same row or column must vanish, violating \eqref{eq:typicalunitary}. Finally, also the condition $c_2=\cmax $ corresponds to a violation of \eqref{eq:typicalunitary}. The use of \eqref{eq:lowerboundapp} completes the claim.
\end{proof}

\subsection{Arbitrarily large gap between $\qmu $ and our bound }\label{apsct2.3}

We show here that the gap between $\qmu $ and our state-independent bound can grow unboundedly, and more precisely logarithmically in the dimension of the system involved. In the case of $q$, this can be seen by observing that it is additive on tensor copies, i.e., $\qq (\{X^{\otimes n},Z^{\otimes n}\}) = n \qq (\{X,Z\})$ where $X$ and $Z$ are arbitrary POVMs. Since $\qmu $ is also additive on tensor copies, any gap between $\qmu $ and $q$ for a single copy of $X$ and $Z$ will get multiplied by $n$.

Our simple analytical bound $q'$ is not additive on tensor copies. Nonetheless, we construct the following example for which $\dl := q' - \qmu$ grows as $\log_2 d$. Consider a Hilbert space $\HC_A = \HC_{A_1}\oplus \HC_{A_2}  $ with $\dim (\HC_A) = d$, $\dim (\HC_{A_1}) = 1$, $\dim (\HC_{A_2}) = d-1$. Let
\begin{equation}
U_0 =\id_{1} \oplus F_{d-1}
\end{equation}
be a unitary acting on $\HC_A$ where $\id_{1} $ is the $1\times 1$ identity matrix (acting on $\HC_{A_1}$). Also, $F_{d-1} = \sum_{j,k} \frac{\omega^{jk}}{\sqrt{d-1}} \dyad{j}{k}=\sum_{j} \dyad{t_j}{j}$, with $\omega = e^{2\pi i/(d-1)}$, is the Fourier matrix of dimension $d-1$, which acts on $\HC_{A_2}$ by mapping the standard basis $S= \{\ket{j}\}$ to the basis $T = \{\ket{t_j}\}$. We suppose that the orthonormal bases on $\HC_A$ of interest ($X$ and $Z$) for the uncertainty relation are related by a unitary $U$ that is the product of $U_0$ and a slight rotation $U_r$, i.e., 
\begin{equation}
U = U_r U_0.
\end{equation}
Now let $\ket{y_0}\in \HC_{A_2}$ be a state that is unbiased with respect to both the $S$ basis and the $T$ basis on $\HC_{A_2}$. (It is always possible to find such a state regardless of the Hilbert space dimension.) We define $U_r$ by
\begin{equation}
 U_r = e^{-i H_r \th} , \quad H_r = \dyad{y'_0}{0}+\dyad{0}{y'_0},
 \end{equation}
where $\ket{y'_0} = V \ket{y_0}$ and $V:\HC_{A_2} \to \HC_{A}$ is an isometry that embeds the $d-1$ dimensional space $\HC_{A_2}$ into the $d$ dimensional space $\HC_{A}$ defined by $V = \sum_{j=0}^{d-2} \dyad{j+1}{j}$. We choose the rotation angle $0 < \th < \pi / 2$ to be a constant, i.e., independent of $d$. Note that
\begin{align}
H_r^2 = \dya{0}+ \dya{y'_0}, \quad H_r^3 = H_r, \quad H_r^4 = H_r^2, \notag
\end{align}
which implies that 
\begin{align}
\sin (H_r \th )&= H_r \cdot \sin \th, \notag\\
\cos (H_r \th)&= (\id-H_r^2)+ H_r^2  \cdot \cos \th, \notag
\end{align}
and 
\begin{align}
U_r &= \cos (H_r \th) - i \sin (H_r \th) \notag\\
&= (\id-H_r^2)+ H_r^2  \cdot \cos \th - i H_r \cdot \sin \th.
\end{align}
For $j\neq 0$ and $k\neq 0$, we have 
\begin{align}
\mte{0}{U} &= \cos (\th),\notag\\
\mted{0}{U}{j} &= -i \sin (\th) \ip{y'_0}{j}, \notag\\
\mted{j}{U}{0}& = -i \sin (\th) \ip{j}{y'_0}, \notag\\
\mted{j}{U}{k} &=  \mted{j}{F'_{d-1}}{k}  +(\cos(\th)-1) \ip{j}{y'_0} \mted{y'_0}{F'_{d-1}}{k}, \notag
\end{align}
where we write $F'_{d-1} = V F_{d-1} V\ad$ for clarity.
In the limit of large $d$, this gives
\begin{align}
c_{00} &= |\mte{0}{U}|^2 = \cos^2 \th,\notag\\
c_{0j} &=| \mted{0}{U}{j} |^2 \approx (1/d) \sin^2 \th , \notag\\
c_{j0}&=| \mted{j}{U}{0} |^2 \approx (1/d) \sin^2 \th , \notag\\
c_{jk} & = | \mted{j}{U}{k}  |^2 \approx  1/d . 
\end{align}
Thus, in this limit, we have
\begin{equation}
\cmax = \cos^2 \th , \quad c_2 \approx 1/d.
\end{equation}
So for large $d$ the gap is given by
\begin{align} 
\dl &:= q' - \qmu \notag\\
&= \frac{1}{2}(1-\sqrt{\cmax })\log_2\frac{\cmax }{c_2}\notag\\
&\approx \frac{1}{2}(1-\cos \th )\log_2 (d \cos^2 \th ).
\end{align}
So $\dl$ grows with $\log_2 d$ in this example.

\section{Conclusions}

We gave two main results: we strengthened the bound in the uncertainty principle with quantum memory, and we formulated an information exclusion relation (a bound on complementary mutual information terms) that also allows for quantum memory. The latter is a major improvement over previously known information exclusion relations, with a much stronger bound that even provides qualitatively new insight into the complementarity of information and how it differs from that of uncertainty. Our results have applications in, e.g., quantum cryptography, entanglement verification and quantum communication. It would be interesting to see if our results extend to smooth entropies or smooth mutual informations that are relevant to non-asymptotic information theory \cite{RennerThesis05URL, TomamichelThesis2012}.

\section{Acknowledgments}

We thank Marco Tomamichel, Koenraad Audenaert and Maris Ozols for helpful discussions. We thank Nathaniel Johnston for providing the proof of Lemma~\ref{lem:generic}. PJC is funded by the Ministry of Education (MOE) and National Research Foundation Singapore, as well as MOE Tier 3 Grant ``Random numbers from quantum processes" (MOE2012-T3-1-009). MP acknowledges support from NSERC, CIFAR, DARPA, and Ontario Centres of Excellence.

\appendix

\section{Proofs of Technical Results}

\subsection{Proof of Thm.~\ref{thmMain230}}
\label{app:proofthm}

Theorem~\ref{thmMain230} is a particular case of Theorem~\ref{thmUncRelApp}, whose proof we provide in Appendix~\ref{app:proofthm2}, but it is instructive to derive it directly.

\begin{proof}
We use basic properties of the relative entropy $D(\rho || \sg) := \Tr(\rho\log_2\rho)- \Tr(\rho\log_2\sg)$ \cite{ColesColbeckYuZwolak2012PRL}. We first start with an identity \cite{ColesEtAl,ColesDecDisc2012} that relates conditional entropy to relative entropy and proceed as follows:
\begin{align}
H&(Z|B) - H(A|B) = D(\rho_{AB}|| \sum_k \dya{z_k}\rho_{AB}\dya{z_k})\notag \\ 
&\geq D(\rho_{XB}|| \sum_{j,k} c_{jk} \dya{x_j} \ot \Tr_A(\dya{z_k} \rho_{AB}) )\\
&\geq D(\rho_{XB}|| \sum_{j} \max_k(c_{jk}) \dya{x_j} \ot \rho_B )\\
&= - H(X|B) -\Tr(\rho_X \log_2 \sum_{j} \max_k(c_{jk}) \dya{x_j} )\\
&= - H(X|B)+ q(\rho_A, X,Z),
\end{align}
where the second line used the monotonicity of relative entropy under quantum channels, for the channel $\XC$. The third line used the fact that $D(S || T) \geq D(S || T')$ if $T' \geq T$, and the fourth line used the definition of $D(\rho ||\sg)$. By symmetry the same proof works by interchanging $X$ and $Z$, thus either $q(\rho_A, X,Z)$ or $q(\rho_A, Z,X)$ can be used in the bound, so we take the maximum as in $q(\rho_A) $.
\end{proof}

\subsection{Proof of Cor.~\ref{thmMain230a}}
\label{app:proofcor}

Corollary~\ref{thmMain230a} is a particular case of Corollary~\ref{thmAppQprimebound}, whose proof we provide in Appendix~\ref{app:proofcor2}, but it is instructive to derive it directly.

\begin{proof}
Let $(\hat{j},\hat{k})$ be such that $c_{\hat{j}\hat{k}}=\cmax$, and $\lm_{\max}(\cdot)$ indicates the maximum eigenvalue. Then, from Theorem~\ref{thmMain230} and definitions~\eqref{eq:qprime},
\begin{align}
q(\rho_A)&\geq \frac{1}{2}[- \sum_j p^x_j \log_2 (\max_k c_{jk} )- \sum_k p^z_k \log_2 (\max_j c_{jk} )] \notag\\
&\geq \frac{1}{2} [- p^x_{\hat{j}} \log_2 \cmax - (1-p^x_{\hat{j}}) \log_2 c_2\notag\\
&\quad - p^z_{\hat{k}} \log_2 \cmax - (1-p^z_{\hat{k}}) \log_2 c_2 ]\notag \\
&=q_{\textup{MU}} +\frac{1}{2}\log_2 \bigg(\frac{\cmax}{c_2}\bigg) [2 - ( p_{\hat{j}}^x +  p_{\hat{k}}^z) ]\notag\\
&\geq q_{\textup{MU}} +\frac{1}{2}\log_2 \bigg(\frac{\cmax}{c_2}\bigg) [2 - \lm_{\max}( \dya{x_{\hat{j}}} + \dya{z_{\hat{k}}}) ].\notag
\end{align}
The last line is equal to $q'$ in~\eqref{eqn23779af}, completing the proof, since $\lm_{\max}( \dya{x_{\hat{j}}} + \dya{z_{\hat{k}}}) = 1+\sqrt{\cmax}$.
\end{proof}

\subsection{Proof of Thm.~\ref{thmUncRelApp}}
\label{app:proofthm2}

Let us first state the following lemma that is used in proving the uncertainty relation. The lemma was given in \cite{ColesEtAl}, but we reproduce its proof here for completeness.
\begin{lemma}\label{thm3875} \cite{ColesEtAl}
Let $Z = \{Z_k\}$ be any POVM on system $A$, then for any tripartite state $\rho_{ABC}$,
\begin{equation}
H(Z|C) \geq D(\rho_{AB} || \sum_k Z_k \rho_{AB} Z_k).
\end{equation}
\end{lemma}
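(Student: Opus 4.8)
The plan is to reduce the inequality, via the measurement isometry $V_Z$ of Eq.~\eqref{eqnisometry11}, to a general statement about the entropy gain of a quantum channel, which itself follows from operator concavity of the logarithm. It is convenient to introduce the channel $\Phi\colo\HC_{AB}\to\HC_{Z'AB}$, $\Phi(\cdot)=\sum_k\dya{k}_{Z'}\ot\sqrt{Z_k}(\cdot)\sqrt{Z_k}$. Since $V_Z$ acts only on $A$, we have $\rhob_{ZZ'AB}=V_Z\rho_{AB}V_Z\ad$, hence $H(\rhob_{ZZ'AB})=H(\rho_{AB})$ by invariance of entropy under isometries, and $\rhob_{Z'AB}=\Tr_Z\rhob_{ZZ'AB}=\Phi(\rho_{AB})$.

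\emph{Step 1.} I would first show that $H(Z|C)\geq H(\Phi(\rho_{AB}))-H(\rho_{AB})$ for every tripartite $\rho_{ABC}$. Purify the post-measurement state $\rhob_{ZZ'ABC}$ on an ancilla $R$. Strong subadditivity (conditioning on more systems cannot increase entropy) gives $H(Z|C)\geq H(Z|CR)$, while purity of $\rhob_{ZZ'ABCR}$ and complementarity of von Neumann entropies give $H(Z|CR)=-H(Z|Z'AB)=H(\rhob_{Z'AB})-H(\rhob_{ZZ'AB})$, which by the previous paragraph equals $H(\Phi(\rho_{AB}))-H(\rho_{AB})$. (If $\rho_{ABC}$ is already pure, this step is an equality and no $R$ is needed.)

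\emph{Step 2.} I would then prove the general fact that $H(\Phi(\rho))-H(\rho)\geq D(\rho\,||\,\Phi\ad(\Phi(\rho)))$ for any channel $\Phi$ and state $\rho$. As $\Phi$ is trace preserving, $\Phi\ad$ is unital and positive, so the Choi--Davis--Jensen inequality for the operator concave function $\log$ yields $\Phi\ad(\log Y)\leq\log\Phi\ad(Y)$; taking $Y=\Phi(\rho)$ and pairing against $\rho$ (using $\Tr[\rho\,\Phi\ad(X)]=\Tr[\Phi(\rho)\,X]$) gives $-H(\Phi(\rho))=\Tr[\Phi(\rho)\log\Phi(\rho)]\leq\Tr[\rho\log\Phi\ad(\Phi(\rho))]$, which rearranges to the claimed bound. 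A one-line computation then gives $\Phi\ad(\Phi(\rho_{AB}))=\sum_k\sqrt{Z_k}\bigl(\sqrt{Z_k}\rho_{AB}\sqrt{Z_k}\bigr)\sqrt{Z_k}=\sum_k Z_k\rho_{AB}Z_k$, so combining Steps 1 and 2 yields exactly $H(Z|C)\geq D\bigl(\rho_{AB}\,||\,\sum_k Z_k\rho_{AB}Z_k\bigr)$.

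The main obstacle I anticipate is the non-invertibility issue in Step 2: $\Phi(\rho_{AB})$ need not be full rank, so $\Phi\ad(\log Y)\leq\log\Phi\ad(Y)$ should be applied to $Y_\vep=\Phi(\rho_{AB})+\vep\id$ followed by $\vep\to0^+$. This limit is legitimate because one can check that $\supp\rho\subseteq\supp\Phi\ad(\Phi(\rho))$ always holds --- if not, $\Phi(\dya{v})$ would be supported in $\ker\Phi(\rho)$ for some $\ket v\in\supp\rho$, contradicting $\dya{v}\leq c\,\rho$ --- so the relative entropy remains finite throughout. The remaining effort, tracking the ancillas $Z,Z',R$ and the complements of the purification in Step 1, is routine.
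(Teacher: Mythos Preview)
Your proof is correct. Step~1 coincides with the paper's: both arrive at $H(Z|C)\geq -H(Z|Z'AB)_{\rhob}$ from strong subadditivity (the paper states it directly; you make the purification explicit). The divergence is in Step~2. The paper rewrites $-H(Z|Z'AB)=D(\rhob_{ZZ'AB}\|\id\ot\rhob_{Z'AB})$ and then stays entirely inside the relative-entropy picture: it applies the pinching bound $D(\rho\|\sg)\geq D(\rho\|\Pi\sg\Pi)$ with $\Pi=V_ZV_Z\ad$ (data processing for the pinching channel), followed by isometry invariance, to land on $D(\rho_{AB}\|V_Z\ad(\id\ot\rhob_{Z'AB})V_Z)=D(\rho_{AB}\|\sum_k Z_k\rho_{AB}Z_k)$. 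You instead isolate the channel-level entropy-gain inequality $H(\Phi(\rho))-H(\rho)\geq D(\rho\|\Phi\ad\Phi(\rho))$ and derive it from the operator-Jensen inequality $\Phi\ad(\log Y)\leq\log\Phi\ad(Y)$ for the unital positive map $\Phi\ad$. Both routes are valid and of comparable length; the paper's has the virtue of using only the monotonicity/invariance toolkit already in play throughout the article, while yours extracts a reusable standalone lemma (entropy gain bounded below by $D(\rho\|\Phi\ad\Phi(\rho))$) at the cost of invoking Choi--Davis--Jensen and handling the $\vep$-regularisation on supports that you correctly flag.
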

\begin{proof}
Consider the state $\rhob_{ZZ'ABC}$ defined in \eqref{eqnPM11}. Applying strong subadditivity to this state gives $H(Z|C)+H(Z|Z'AB)\geq 0$. Now note that conditional entropy can be rewritten in terms of relative entropy with the formula $-H(A|B)_{\sg} = D(\sg_{AB}||\id \ot \sg_B)$. So we have:
\begin{align}
H(Z|C) &\geq  - H(Z|Z'AB)\\
&= D(\rhob_{ZZ'AB}|| \id \ot \rhob_{Z'AB})\\
&\geq  D(\rhob_{ZZ'AB}|| V_ZV_Z\ad (\id \ot \rhob_{Z'AB})V_ZV_Z\ad)\\
&=  D(\rho_{AB}|| V_Z\ad (\id \ot \rhob_{Z'AB})V_Z)\\
&=  D(\rho_{AB}|| \sum_k Z_k \rho_{AB} Z_k).
\end{align}
The third line used the property $D(\rho || \sg)\geq D(\rho || \Pi_{\rho} \sg\Pi_{\rho})$ where $\Pi_{\rho}$ is a projector onto a space that includes the support of $\rho$; in this case we chose $\Pi_{\rho} = V_ZV_Z\ad$. The fourth line used the invariance of relative entropy under isometries. It is straightforward to verify the fifth line using $\rhob_{Z'AB} = \sum_k \dya{k} \ot \sqrt{Z_k}\rho_{AB}\sqrt{Z_k}$.
\end{proof}

Now we prove Thm.~\ref{thmUncRelApp}.

\begin{proof}
Starting from Lemma~\ref{thm3875} we invoke the data-processing inequality for the quantum channel $\XC$ in \eqref{eqnXchannel11}, as follows
\begin{align}
&H(Z|C) \geq D(\rho_{AB} || \sum_k Z_k \rho_{AB} Z_k)\notag\\
&\geq D(\rhoh_{XB} || \sum_{j,k} \dya{j} \ot \Tr_A(Z_k X_j Z_k \rho_{AB}))\\
&\geq D(\rhoh_{XB} || \sum_j  h_j(X,Z) \dya{j}\ot  \rho_{B} )\\
&=  - H(\rhoh_{XB}) - \Tr_{XB} [ \rhoh_{XB} \log_2  \sum_j  h_j(X,Z) \dya{j}\ot  \rho_{B} ] \\
&=  - H(X|B) - \Tr_X [\rho_{X} \log_2  \sum_j  h_j(X,Z) \dya{j} ]\\
&=  - H(X|B) + \qq (\rho_A, X,Z)
\end{align}
where the fifth line used the additivity of the log for tensor products. The third line invoked the property $D(S||T)\geq D(S||T')$ if $T' \geq T$, where we note that\begin{align}
\sum_{j,k} &\dya{j}\ot \Tr_A(Z_k X_j Z_k \rho_{AB}) \notag\\
&= \sum_{j}  \dya{j}\ot \Tr_A[(\sum_k Z_k X_j Z_k) \rho_{AB}] \notag\\
&\leq \sum_j  h_j(X,Z) \dya{j}\ot  \rho_{B} 
\end{align}
since $\sum_k Z_k X_j Z_k \leq \| \sum_k Z_k X_j Z_k \|_{\infty}\id$.

Finally, by symmetry, one can interchange $X$ and $Z$ in the bound and hence use $\qq (\rho_A)$.
\end{proof}

\subsection{Proof of Lem.~\ref{thmPOVM3875}}
\label{app:prooflemA}

\begin{proof}
First notice that
$$\max_k \big\| \sqrt{Z_k} \sg \sqrt{Z_k} \big\|_{\infty} =  \big\| \rho \big\|_{\infty},$$
where
$$\rho:= \sum_k \dya{k}\ot \sqrt{Z_k} \sg \sqrt{Z_k}$$
and $\{\ket{k}\}$ is the standard basis on an auxiliary space. Now consider the isometry $V = \sum_k \ket{k} \ot \sqrt{Z_k}$ and notice that
$$\sum_k Z_k \sg Z_k=V\ad \rho V.$$
So we wish to show that
$$\big\|  \rho \big\|_{\infty}\geq \big\|  V\ad \rho V \big\|_{\infty}.$$
Consider the projector $\Pi = VV\ad$ and the channel $\EC(\cdot) = \Pi(\cdot) \Pi + (\id- \Pi)(\cdot) (\id- \Pi)$ that pinches with respect to this projector. It is a standard result in matrix analysis that the infinity norm never increases upon pinching the argument~\cite{bhatia1997matrix}. So we have
\begin{align}
\| \rho \|_{\infty} &\geq \| \EC(\rho) \|_{\infty} \notag\\
&= \max\{ \| \Pi \rho \Pi \|_{\infty}, \| (\id - \Pi )\rho (\id - \Pi ) \|_{\infty}\} \notag\\
&\geq \| \Pi \rho \Pi \|_{\infty} = \| V\ad \rho V \|_{\infty},
\end{align}
where the last equality uses the invariance of the norm under isometries. 
\end{proof}

\subsection{Proof of Cor.~\ref{thmAppQprimebound}}
\label{app:proofcor2}

We first note the following useful lemma, shown, e.g., in Refs.~\cite{TomEtAl2012arXiv1210.4359T, SchaffnerThesis2007}.
\begin{lemma}
\label{thmPosOp1}
For any positive semi-definite operators $S\geq~0$ and $T\geq 0$, we have
\begin{equation}
\label{eqposoplemm1}
\| S + T \|_{\infty} \leq \max \{ \| S  \|_{\infty},\|  T \|_{\infty}\} + \| \sqrt{S} \sqrt{T} \|_{\infty}.
\end{equation}
\end{lemma}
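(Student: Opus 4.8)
The plan is to use the standard dilation (block-matrix) argument. First I would write $S+T$ as a single Gram-type product: set $R = \left(\begin{smallmatrix}\sqrt{S}\\ \sqrt{T}\end{smallmatrix}\right)$, the column operator $v\mapsto(\sqrt{S}\,v,\sqrt{T}\,v)$ into the direct sum of two copies of the underlying space, where $\sqrt{S},\sqrt{T}$ are the principal (positive semidefinite) square roots. Then $R\ad = \left(\begin{smallmatrix}\sqrt{S}&\sqrt{T}\end{smallmatrix}\right)$ and $R\ad R = (\sqrt{S})^2+(\sqrt{T})^2 = S+T$. Since $\|R\ad R\|_\infty = \|R\|_\infty^2 = \|R R\ad\|_\infty$ for every operator $R$, we get $\|S+T\|_\infty = \|R R\ad\|_\infty$, where a direct computation gives the $2\times2$ block form
\[
R R\ad = \begin{pmatrix} S & \sqrt{S}\sqrt{T} \\ \sqrt{T}\sqrt{S} & T \end{pmatrix}.
\]

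Next I would decompose $RR\ad$ into its block-diagonal and block-off-diagonal parts and apply the triangle inequality for $\|\cdot\|_\infty$:
\[
\|R R\ad\|_\infty \leq \left\| \begin{pmatrix} S & 0 \\ 0 & T \end{pmatrix} \right\|_\infty + \left\| \begin{pmatrix} 0 & \sqrt{S}\sqrt{T} \\ \sqrt{T}\sqrt{S} & 0 \end{pmatrix} \right\|_\infty .
\]
The block-diagonal term has norm $\max\{\|S\|_\infty,\|T\|_\infty\}$. For the off-diagonal term I would use the elementary fact that a Hermitian operator of the form $\left(\begin{smallmatrix}0&B\\ B\ad&0\end{smallmatrix}\right)$ has spectrum $\{\pm\sigma_i(B)\}$ (read off from an SVD of $B$), so with $B=\sqrt{S}\sqrt{T}$ its operator norm equals $\|\sqrt{S}\sqrt{T}\|_\infty$ exactly. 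Chaining the three displays yields \eqref{eqposoplemm1}.

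I do not anticipate a real obstacle here; the only point that deserves care is that the off-diagonal block of $RR\ad$ is precisely $\sqrt{S}\sqrt{T}$ (and its adjoint), so that the cross term in the triangle inequality is $\|\sqrt{S}\sqrt{T}\|_\infty$ rather than something larger — and this is exactly what the choice of positive square roots in $R$ guarantees. One can also run the argument without block matrices, starting from a unit vector $\ket{\psi}$ attaining $\mte{\psi}{S+T}=\|S+T\|_\infty$ and estimating $\|\sqrt{S}\ket{\psi}\|^2+\|\sqrt{T}\ket{\psi}\|^2$, but the dilation picture is the cleanest way to see why $\|\sqrt{S}\sqrt{T}\|_\infty$ is the natural complementary term.
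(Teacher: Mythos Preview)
Your proof is correct: the dilation $R=\bigl(\begin{smallmatrix}\sqrt{S}\\ \sqrt{T}\end{smallmatrix}\bigr)$ giving $\|S+T\|_\infty=\|RR\ad\|_\infty$, followed by the triangle inequality on the block-diagonal/off-diagonal split of $RR\ad$, is a clean and complete argument. Note, however, that the paper does not actually supply its own proof of this lemma; it merely states it and cites external references (Tomamichel et al.\ and Schaffner's thesis) for the proof, so there is no in-paper argument to compare against.
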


Now we prove Cor.~\ref{thmAppQprimebound}.
\begin{proof}
From the definition \eqref{eqnqprimerhoaApp} of $\qq (\rho_A)$, and using \eqref{eqnhj5}, we have
\begin{align}
\qq  (\rho_A)&\geq  \max\{- \sum_j p^x_j \log_2 (\max_k c_{jk} ), -\sum_k p^z_k \log_2 (\max_j c_{jk} )\}\notag\\
&\geq \frac{1}{2}[- \sum_j p^x_j \log_2 (\max_k c_{jk} )- \sum_k p^z_k \log_2 (\max_j c_{jk} )] \notag\\
&\geq \frac{1}{2} [- p^x_{\hat{j}} \log_2 \cmax - (1-p^x_{\hat{j}}) \log_2 c_2\notag\\
&\quad - p^z_{\hat{k}} \log_2 \cmax - (1-p^z_{\hat{k}}) \log_2 c_2 ]\notag \\
\label{eqnApp3242}&=\qmu +\frac{1}{2}\log_2 \bigg(\frac{\cmax }{c_2}\bigg) [2 - ( p_{\hat{j}}^x +  p_{\hat{k}}^z) ].
\end{align}
Since $\log_2 (\cmax /c_2) \geq 0$ by assumption, to bound $\qq  = \min_{\rho_A} \qq (\rho_A)$ we need to evaluate
\begin{align}
\max_{\rho_A} (p_{\hat{j}}^x +  p_{\hat{k}}^z)&=\max_{\rho_A} \Tr [\rho_A ( X_{\hat{j}} + Z_{\hat{k}})]\notag\\
&= \| X_{\hat{j}} + Z_{\hat{k}} \|_\infty \notag\\
&\leq \max\{\|X_{\hat{j}}\|_\infty,\|Z_{\hat{k}}\|_\infty\}+  \| \sqrt{X_{\hat{j}}}\sqrt{Z_{\hat{k}}}  \|_\infty \notag\\
&\leq 1+ \sqrt{\cmax },
\end{align}
where in the first inequality we have used \eqref{eqposoplemm1} from Lemma \ref{thmPosOp1}, and in the second inequality the fact that $X_{\hat{j}}$ and $Z_{\hat{k}}$, being POVM elements,  both have operator norm less than unity. Plugging this into \eqref{eqnApp3242} proves~\eqref{eq:lowerboundapp}.
\end{proof}

\subsection{Proof of Lem.~\ref{lem:generic}}
\label{app:prooflem}

\begin{proof}
The proof relies on concepts of algebraic geometry~\cite{humphreys1975linear}. The set of unitaries in dimension $d$ has real dimension $d^2$, that is, one has to specify $d^2$ real parameters to specify a unitary $U$. On the other hand, unitaries can be seen as forming a real algebraic variety $\mathcal{U}$ in $\mathbb{R}^{2d^2}$. Indeed, let the real numbers $x_{kl}$ and $y_{kl}$ be the real and imaginary components of the matrix entry $U_{kl}$, i.e., $U_{kl}= x_{kl} + i y_{kl}$. Then the condition $U^\dagger U=\openone$ corresponds to a system of quadratic equations in the $x_{kl}$'s and $y_{kl}$'s. Since the unitaries form a connected group, the algebraic variety $\mathcal{U}$ is irreducible~\cite{humphreys1975linear}. In particular, if $\mathcal{Z}$ is another algebraic variety, either $\mathcal{U}\cap \mathcal{Z}$ is equal to $\mathcal{U}$ (if $\mathcal{U}\subseteq\mathcal{Z}$) or $\mathcal{U}\cap \mathcal{Z}$ has real dimension strictly smaller than $d^2$. For any choice of two ordered pairs $(i,j)$ and $(k,l)$, consider the algebraic variety $\mathcal{Z}_{(i,j)(k,l)}\subseteq\mathbb{R}^{2d^2}$ defined by $(x^2_{ij}+y_{ij}^2)-(x^2_{kl}+y_{kl}^2) = 0$. It is easy to check that for $d\geq 3$, $\mathcal{U}\not\subseteq \mathcal{Z}_{(i,j)(k,l)}$ for every choice of $(i,j)\neq (k,l)$. This is because, when $d\geq 3$, for any $(i,j)\neq (k,l)$ it is possible to find a unitary that does not belong to $\mathcal{Z}_{(i,j)(k,l)}$. Notice that, on the other hand, $\mathcal{U}=\mathcal{Z}_{(1,1)(2,2)}=\mathcal{Z}_{(1,2)(2,1)}$ for $d=2$. Thus, for $d\geq 3$ and $(i,j)\neq (k,l)$, $\mathcal{U}\cap \mathcal{Z}_{(i,j)(k,l)}$ has real dimension strictly less than $d^2$, hence vanishing Haar measure. Given that there is a finite number of sets $Z_{(i,j)(k,l)}$, it also holds that the union of all $\mathcal{U}\cap\mathcal{Z}_{(i,j)(k,l)}$'s has real dimension strictly less than $d^2$ and vanishing Haar measure. The claim follows.
\end{proof}


\end{document}